\newtheorem{theorem}{Theorem}
\newtheorem{lemma}{Lemma}
\begin{document}

\title{Entanglement trimming in stabilizer formalism}

\author{Changchun Zhong}
\affiliation{Pritzker School of Molecular Engineering, University of Chicago, Chicago, IL 60637, USA}

\author{Yat Wong}
\affiliation{Pritzker School of Molecular Engineering, University of Chicago, Chicago, IL 60637, USA}

\author{Liang Jiang}
\affiliation{Pritzker School of Molecular Engineering, University of Chicago, Chicago, IL 60637, USA}

\date{\today}

\begin{abstract}

Suppose in a quantum network, there are $n$ qubits hold by Alice, Bob and Charlie, denoted by systems $A$, $B$ and $C$, respectively. We require the qubits to be described by a stabilizer state and assume the system $A$ is entangled with the combined system $BC$. An interesting question to ask is when it is possible to transfer all the entanglement to system $A$ and $B$ by local operation on $C$ and classical communication to $AB$, namely \textit{entanglement trimming}. We find a necessary and sufficient condition and prove constructively for this entanglement trimming, which we name it as ``the bigger man principle". This principle is then extended to qudit with square-free dimension and continuous variable stabilizer states. 

\end{abstract}

\maketitle

\textit{Introduction}.---Distributed quantum architecture has been recently proposed for various quantum enhanced applications, e.g., quantum metrology, quantum communication and quantum computation \cite{wehner2018,sansavini2020,zhuang2020,Nielsen2002}. These quantum enhancements usually require efficiently generating and distributing entanglement between different quantum nodes in a network through connected quantum channels \cite{Kimble2008}. In practice, the quantum entanglement in the network will suffer from environment noise, leading to a rapid resource decay in time. Thus an efficient algorithm to manipulate the entanglement is critical. To get a higher two-node entanglement, one can always resort directly to the connected quantum channels. However, in practice the quantum channel could be very expensive. A possible more economical resolution is to make full use of the preexisting entanglement in the network and perform some local operations on other nodes such that the entanglement could concentrate to the chosen two nodes \cite{Loock2000,Jing2006}. Similar question has been raised up in Ref.~\cite{stephanie2020,pant2019,pant2019_}, where Bell state between chosen nodes is targeted through local operations. In this paper, we study a tripartite network by asking the question: when and how two parties can preserve all the entanglement if the third party is removed. This problem can be captured by a simple model: as illustrated in Fig.~\ref{fig1}, for a tripartite quantum system described by a density operator $\rho^{ABC}$, whether it is possible to locally disentangle the system $C$ while transfer the entanglement between $A$ and $BC$ to system $A$ and $B$, namely the \textit{entanglement trimming}: $\rho^{ABC}\rightarrow \Tilde{\rho}^{AB}\otimes\Tilde{\rho}^C$ with $E(\rho^{A|BC})=E(\Tilde{\rho}^{A|B})$ ($E$ is any entanglement measure).

In particular, we study the entanglement trimming of a given stabilizer state with tri-partition using only single party Clifford operations and classical communication. We answer the question in the positive by showing the possibility of precisely transferring the tripartite entanglement to the chosen two parties through local operations on the third. 
Unlike general entangled states, which need exponential large number of parameters to describe, the stabilizer states can be efficiently specified by its stabilizer generators, e.g., only $2n^2$ bits are needed to describe a $n$-qubit stabilizer state, and the state property is fully determined by its generators \cite{Daniel1997,Daniel1998}. Through exploring the stabilizer generator structures, e.g., the canonical decomposition \cite{Fattal2004}, we identity a necessary and sufficient condition for the entanglement trimming, and we call it ``\textit{the bigger man principle}"---to be respectful to information theorists and their preference of using metaphor to reveal entanglement rules, e.g., entanglement monogamy \cite{Coffman2000,Terhal2004}, and here we refer the bigger man to Charlie (system $C$), who sacrifices himself in ending a triangular relation and remarkably not hurting others. Although the stabilizer states, together with Clifford operations are not enough to demonstrate quantum advantage \cite{Daniel1996}, they contain rather rich quantum structures and play an essential role in the measurement based quantum computation \cite{raussendorf2001,Hein2004}. Thus revealing their entanglement properties would be helpful in designing new quantum algorithms. In this paper, we first discuss ``the bigger man principle" of entanglement trimming for qubit stabilizer states, then extend it to the stabilizer state of general discrete and continuous variables.

\begin{figure}[t]
\includegraphics[width=\columnwidth]{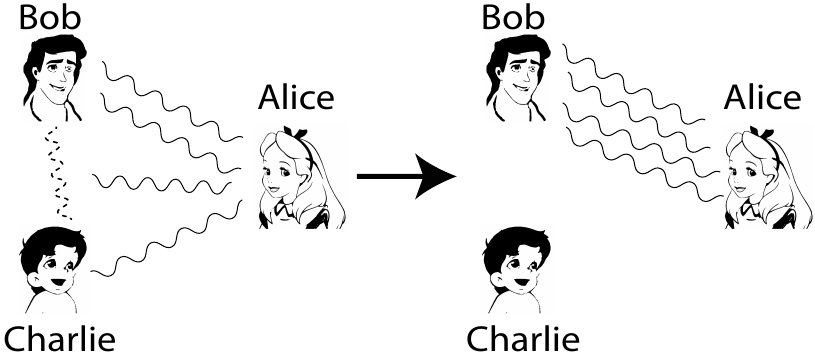}
\caption{Alice is initially entangled with the combined system hold by Bob and Charlie. Local measurement on Charlie's system gets him disentangled, while all entanglement is transferred to between Alice and Bob. \label{fig1}}
\end{figure}

\textit{Entanglement trimming for qubit stabilizer state}.---The stabilizer formalism is well known as a powerful tool in discussing a wide class of quantum error correcting codes \cite{Daniel1996,Daniel19982} as well as analyzing the complexity of classical simulation of quantum computations \cite{Daniel1997}. The stabilizer for a quantum error correcting code is an Abelian subgroup of the Pauli group for $n$ qubits. Any state in the code is a simultaneously eigen-state of the Abelian subgroup elements. If the subgroup has $k$ generators, it defines an error correcting code with dimension $2^{n-k}$. When $k=n$, the stabilizer uniquely determines a stabilizer state. Stabilizer states can be efficiently described by their group generators, which scales only polynomially with the number of qubits. By exploring the structure of the group generators, many fruitful results have been achieved, such as quantifying stabilizer state entanglement \cite{Fattal2004,Koenraad2005}, developing the Gottesman-Knill theorem \cite{Gottesman1998,Stephen2002}, etc. Suppose we have a stabilizer $S=\braket{g_1,g_2,...g_n}$ with $n$ generators, which stabilizes an $n$-qubit state
\begin{equation}\label{eqs1}
    \rho=\frac{1}{2^n}\sum_{g\in S}g.
\end{equation}
With a partition into system $A$ and $B$ ($n_A+n_B=n$), the stabilizers in $S$ can be divided into several subgroups $S_{AB}=\braket{g_i^A\otimes g_i^B}_{i=1,...,\abs{S_{AB}}}$, $S_A=\braket{g_j^A\otimes I^B}_{j=1,...,\abs{S_{A}}}$ and $S_B=\braket{I^A\otimes g_k^B}_{k=1,...,\abs{S_{B}}}$, where $\abs{\cdot}$ denotes the group rank. $S_{AB}$ is responsible for the state correlation while $S_A$ and $S_A$ give the local information. The bipartite entanglement can be obtained by calculating the von Neumann entropy of the reduced density operator. Since Pauli operators are trace free, we have
\begin{equation}
    \rho_B=\frac{1}{2^{n_B}}\sum_{g\in S_B}g=\frac{1}{2^{n_B-\abs{S_B}}}\prod_{k=1}^{\abs{S_B}}\frac{I+g^B_k}{2}.
\end{equation}
where $\prod_{k=1}^{\abs{S_B}}\frac{I+g^B_k}{2}$ is the projection onto a subspace with dimension $2^{n_B-\abs{S_B}}$. The von Neumann entropy is thus evaluated to be $E=-\text{tr}(\rho_B\log_2\rho_B)=n_B-\abs{S_B}$. Similarly, by taking partial trace over system $B$, we can get another expression of the entanglement as $E=n_A-\abs{S_A}$. Since $\rho$ is a pure state, we have $n_A-\abs{S_A}=n_B-\abs{S_B}$. Combining the fact that $\abs{S_A}+\abs{S_B}+\abs{S_{AB}}=n$, it is easy to get that the entanglement $E={\abs{S_{AB}}}/{2}$,
where we see that the number of generators for the subgroup $S_{AB}$ determines the entanglement. It is in this sense that the generators for $S_{AB}$ are called correlated, while the generators for $S_A$ and $S_B$ are named local.

\begin{lemma}\label{the1}---\textbf{Canonical decomposition of generators for qubit}: the generator $S=\braket{g_1,g_2,...,g_n}$ for $n$-qubit stabilizer state of any bi-partition $A|B$ can always be brought into the canonical form: $S=\braket{a_i\otimes I_B,I_A\otimes b_j,g_k^A\otimes g_k^B,\Bar{g}_k^A\otimes\Bar{g}_k^B}$, where the first two subsets generate $S_A$ and $S_B$, and the last two sets generate $S_{AB}$, responsible for correlation. The generators of $S_{AB}$ collect into $E=\abs{S_{AB}}/2$ pairs $\{g_k\equiv g_k^A\otimes g_k^B,\Bar{g}_k\equiv \Bar{g}_k^A\otimes\Bar{g}_k^B\}$, where the opeartor $g_k^A$ $(g_k^B)$ commutes with all canonical generators of $S$ except $\Bar{g}_k$, while the operator $\bar{g}_k^A$ $(\bar{g}_k^B)$ commutes with all generators in $S$ except $g_k$---we define this property as \textbf{exclusive commutation relation}. 
\end{lemma}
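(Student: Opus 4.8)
The plan is to work in the binary symplectic representation, where --- up to phases, which are irrelevant to every commutation relation --- the stabilizer $S$ is an $n$-dimensional $\mathbb{F}_2$-subspace of $\mathbb{F}_2^{2n}\cong\mathbb{F}_2^{2n_A}\oplus\mathbb{F}_2^{2n_B}$ carrying the symplectic form $\lambda=\lambda_A\oplus\lambda_B$ that records anticommutation. Since stabilizers mutually commute, $\lambda$ vanishes on $S$, so $S\subseteq S^{\perp}$, and as $\dim S^{\perp}=2n-n=n$ we get $S=S^{\perp}$: $S$ is Lagrangian, and this self-duality is the one structural fact I will lean on. On $S$ define the correlation form $\omega(g,h):=\lambda_A(g^A,h^A)$; global commutation of $g,h\in S$ forces $\lambda_A(g^A,h^A)=\lambda_B(g^B,h^B)$, so $\omega$ is read the same way from the $A$-parts and from the $B$-parts, and is a well-defined alternating bilinear form on $S\cong\mathbb{F}_2^{n}$. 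Let $S_A\subseteq S$ collect the elements trivial on $B$ and $S_B\subseteq S$ those trivial on $A$; then $S_A\cap S_B=0$, and since $\omega$ read off the $B$-parts vanishes on $S_A$ while read off the $A$-parts it vanishes on $S_B$, both lie in $\operatorname{rad}(\omega)$.

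The one substantive step is to prove $\operatorname{rad}(\omega)=S_AS_B$. For $\subseteq$, let $g=g^A\otimes g^B\in\operatorname{rad}(\omega)$; then $\lambda(g^A\otimes I_B,h)=\lambda_A(g^A,h^A)=\omega(g,h)=0$ for every $h=h^A\otimes h^B\in S$, so $g^A\otimes I_B\in S^{\perp}=S$ and hence $g^A\otimes I_B\in S_A$. Multiplying it into $g$ leaves $I_A\otimes g^B\in S$, which is trivial on $A$ and so in $S_B$, whence $g\in S_AS_B$; the inclusion $\supseteq$ was noted above. This is the real content of ``the bigger man principle'': a stabilizer element invisible to local anticommutation must literally factor into a purely-$A$ and a purely-$B$ stabilizer, and the reason is nothing but $S=S^{\perp}$. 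I expect this to be the step to handle with care --- not for its length, but because the tempting alternative of reasoning through the pure state's Schmidt decomposition is more cumbersome; staying inside the $\mathbb{F}_2$ picture keeps phases out entirely, to be reinstated at the end via the unique Hermitian Pauli attached to each binary vector of $S$.

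Everything after that is symplectic linear algebra. Since $\operatorname{rad}(\omega)=S_AS_B$, the form descends to a nondegenerate alternating form on $Q:=S/(S_AS_B)$, so $\dim Q$ is even and equals $\operatorname{rank}(\omega)=n-\abs{S_A}-\abs{S_B}=\abs{S_{AB}}=2E$ by the preceding entanglement count. Symplectic Gram--Schmidt on $Q$ yields a basis $[g_1],[\bar g_1],\dots,[g_E],[\bar g_E]$ with $\omega(g_k,\bar g_l)=\delta_{kl}$ and $\omega(g_k,g_l)=\omega(\bar g_k,\bar g_l)=0$; lifting $g_k,\bar g_k$ back into $S$ and adjoining bases $\{a_i\otimes I_B\}$ of $S_A$ and $\{I_A\otimes b_j\}$ of $S_B$ gives $\abs{S_A}+\abs{S_B}+2E=n$ elements that are independent (project onto $Q$, then onto $S_A\oplus S_B$) and hence generate $S$ --- this is the canonical form. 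Finally I would read the exclusive commutation relation straight off the $\omega$-values: $g_k^A$ anticommutes with $\bar g_k^A$ but commutes with the $A$-part of every other canonical generator, since $\omega(g_k,g_l)=\omega(g_k,\bar g_l)=0$ for $l\neq k$, $\omega(g_k,a_i\otimes I_B)=0$ as $S_A\subseteq\operatorname{rad}(\omega)$, and $g_k^A$ meets $I_A\otimes b_j$ on a disjoint factor; because $\omega$ is computed identically from the two sides, the same holds for $g_k^B$, and symmetrically for $\bar g_k^A,\bar g_k^B$. This reproduces the asserted structure, with $E=\operatorname{rank}(\omega)/2$ the entanglement.
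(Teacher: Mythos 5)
Your proof is correct, and it reaches the canonical form by a somewhat different route than the one the paper records. The paper itself defers the qubit case to Fattal et al.\ and, for the qudit and continuous-variable analogues in its supplement, argues group-theoretically: it extracts a maximal Abelian subgroup of the projected correlated group, pairs the remaining generators by a multiplication-and-pigeonhole argument, and then kills the center of $P_A(S_{AB})$ by counting independent commuting generators on each side against the total rank $n$. You instead linearize everything over $\mathbb{F}_2$ and isolate the single structural fact $S=S^{\perp}$ (the stabilizer is Lagrangian), from which the key identity $\operatorname{rad}(\omega)=S_AS_B$ --- the exact counterpart of the paper's ``trivial center'' lemma --- falls out in one line: a radical element's $A$-part commutes with all of $S$, hence lies in $S$, hence in $S_A$. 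After that, symplectic Gram--Schmidt on the quotient $S/(S_AS_B)$ is common in substance to both treatments and directly encodes the exclusive commutation relation. What your version buys is a uniform argument that transfers verbatim to prime dimension and to the continuous-variable case (where the paper's supplement in effect redoes the same linear algebra in $(\mathcal{R}^{2n},\sigma)$), and it replaces the pigeonhole and generator-counting steps with self-duality; the only point needing the care you already flag is that the lift from binary vectors back to Pauli operators is unique because $-I\notin S$, so the element of $S$ with symplectic vector $(g^A,0)$ really is $\pm\, g^A\otimes I_B$ with the sign absorbable into $g^A$.
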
 
The proof of this lemma is first given in Ref.~\cite{Fattal2004}, later we will generalize the result to qudit with prime dimension and continuous variable stabilizer state. It is worth mentioning that identifying the canonical form of a given stabilizer can be done efficiently, which scales polynomially in the number of qubits \cite{Fattal2004,Koenraad2005}.
This lemma gives us an easy way of getting the bi-partite entanglement by counting the $E$ correlated pairs. More importantly it allows us to identify the set of stabilizer states that are possible for entanglement trimming and also gives us a constructive way to realize it. 

Suppose the $n$-qubit stabilizer state, stabilized by $S=\braket{g_1,g_2,...g_n}$, is hold by three parties $A$, $B$ and $C$. If taking the partition $AB|C$ (group $AB$ together), then according to lemma \ref{the1}, we can bring the stabilizer generators into the canonical form, \begin{equation}\label{ecano}
    S=\braket{a_i^{AB}\otimes I^C,I^{AB}\otimes b_j^C,g_k^{AB}\otimes g_k^{C},\bar{g}_k^{AB}\otimes\bar{g}_k^{C}},
\end{equation}
where the last two sets of generators determine the $AB|C$ entanglement. 
\begin{theorem}
---\textbf{the bigger man principle for qubit stabilizer state}: given a general $n$-qubit stabilizer state $\rho^{ABC}$ stabilized by Eq.~(\ref{ecano}), the $A|BC$ entanglement can be transferred to $A|B$ by local operations on $C$ if and only if, (1) for any given $k$, one of its paired generators $\{g_k^{AB}\otimes g_k^C,\bar{g}_k^{AB}\otimes\bar{g}_k^C\}$ can be expressed as local operator in terms of $A|BC$, e.g., $g_k^{AB}g_k^C=I^Ag_k^Bg_k^C$, or (2) even number of pairs of correlated generators can be expressed in the form, e.g., $\{I^Ag_{k^\prime}^B\otimes g_{k^\prime}^C,I^A\bar{g}_{k^\prime}^B\otimes \bar{g}_{k^\prime}^C\}$ and $\{{g}_k^AI^B\otimes g_k^C,\bar{g}_k^AI^B\otimes \bar{g}_k^C\}$.
\end{theorem}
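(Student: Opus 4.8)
The plan is to turn the statement into a monotonicity bound paired with an explicit construction. The key preliminary observation is that any protocol of the required type --- Charlie applies a local instrument, broadcasts the outcome, and Alice and Bob apply outcome-dependent local corrections --- is an LOCC map with respect to the partition $A|B|C$, hence also with respect to any coarsening of it. Since the target state factorizes as $\tilde\rho^{AB}\otimes\tilde\rho^{C}$, its $A|BC$ entanglement equals its $A|B$ entanglement, so monotonicity across $A|BC$ already gives $E(\tilde\rho^{A|B})\le E(\rho^{A|BC})$; trimming is therefore possible precisely when some such map saturates this bound, i.e.\ disentangles $C$ without lowering the $A|BC$ entanglement.

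I would then pass to a standard form. Using only single-party Cliffords on $A$, $B$ and $C$ --- which the parties may apply unconditionally, and which preserve every entanglement measure and the stabilizer structure --- bring $\rho^{ABC}$ to a tensor product of $AB$-Bell pairs, $BC$-Bell pairs, $AC$-Bell pairs, $GHZ$ triples and single-qubit local states \cite{Fattal2004}. Reading this against the $AB|C$ canonical generators of Eq.~(\ref{ecano}) (with a suitable choice of canonical form) gives the dictionary I need: a $BC$-Bell pair is exactly a correlated pair both of whose generators are $A$-local (the ``form1'' of condition (2)); an $AC$-Bell pair is exactly a correlated pair both of whose generators are $B$-local (``form2''); a $GHZ$ triple is exactly a correlated pair with precisely one $A$-local generator; and $AB$-Bell pairs do not show up among the correlated pairs at all. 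Under this dictionary, ``condition (1) holds, or condition (2) holds'' becomes the single inequality $\#(AC\text{-Bell})\le\#(BC\text{-Bell})$ --- condition (1) being the case $\#(AC\text{-Bell})=0$, and condition (2) supplying, for each $AC$-Bell pair, one of the $BC$-Bell pairs, so that the selected correlated pairs split into equally many form1's and form2's, an even number in total.

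For the ``if'' direction I would exhibit the protocol on the standard form: $AB$-Bell pairs are left alone; for each $GHZ$ triple Charlie measures his qubit in the conjugate ($X$) basis, yielding an $AB$-Bell pair after a Pauli correction on $A$ or $B$; each $AC$-Bell pair is merged with one $BC$-Bell pair by a Bell measurement on Charlie's two qubits (entanglement swapping), again yielding, after a Pauli correction, one $AB$-Bell pair; and any remaining $BC$-Bell pairs are destroyed by a $Z$-measurement on Charlie's half. Charlie ends in a product state by construction, and an ebit count gives $E(\tilde\rho^{A|B})=E(\rho^{A|BC})$ on the nose. For the ``only if'' direction I would re-run the monotonicity bound, but now across the cut $AC|B$ obtained by grouping Alice with Charlie: the same LOCC map gives $E(\tilde\rho^{A|B})=E(\tilde\rho^{AC|B})\le E(\rho^{AC|B})$, and in the standard form $E(\rho^{AC|B})$ counts the $AB$-Bell pairs, the $BC$-Bell pairs and the $GHZ$ triples but \emph{not} the $AC$-Bell pairs, while $E(\rho^{A|BC})$ counts the $AB$-Bell pairs, the $AC$-Bell pairs and the $GHZ$ triples. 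Equating the two, as trimming demands, forces $\#(AC\text{-Bell})\le\#(BC\text{-Bell})$, i.e.\ condition (1) or (2).

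The step I expect to be most delicate is the dictionary in the second paragraph: showing that the $AB|C$ canonical decomposition of Lemma~\ref{the1} together with single-party Cliffords always refines to the five-block standard form, and that the generator-level phrasing of conditions (1) and (2) --- which refers to the non-unique canonical generators and to ``expressing'' a generator as $A$- or $B$-local by multiplying in local stabilizers --- is genuinely equivalent to the Bell-pair count, in particular that this multiplication freedom cannot be improved by mixing in generators from other correlated pairs. A secondary point to watch is the entanglement monotone used in the ``only if'' argument, since a general instrument on Charlie could in principle leave $\tilde\rho^{AB}$ mixed; picking a bona fide LOCC monotone (e.g.\ the entanglement of formation, which for stabilizer protocols reduces branch-wise to the pure-state value) closes that gap.
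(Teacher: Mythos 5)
Your proposal is correct in outline but takes a genuinely different route from the paper. The paper works entirely at the level of the $AB|C$ canonical generators of Eq.~(\ref{ecano}): the ``if'' direction measures $g_k^C$ for each pair whose partner is $A$-local and tracks the generator replacements; the ``only if'' direction is an explicit (and somewhat informal) argument that no Pauli measurement on $C$ can do better. You instead invoke the tripartite normal form of \cite{Fattal2004} (local-Clifford equivalence to $AB$-, $AC$-, $BC$-Bell pairs, GHZ triples and local qubits), reduce the theorem to the single inequality $\#(AC\text{-Bell})\le\#(BC\text{-Bell})$, realize the ``if'' direction by entanglement swapping at $C$, and prove the ``only if'' direction by LOCC monotonicity across the $AC|B$ cut. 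Your necessity argument is a real improvement: it is a one-line invariant bound ($E(\tilde\rho^{A|B})\le E(\rho^{AC|B})$, equivalently $E(A|BC)\le E(B|AC)$) that covers \emph{arbitrary} local operations and classical communication, whereas the paper's argument only quantifies over Pauli measurements and relies on the unproven assertion that the displayed generators ``capture'' the surviving entanglement.

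The one genuine gap is the dictionary step you yourself flag, and it is asymmetric. The direction you need for ``only if'' is easy: the normal-form blocks, transported back by the single-party Cliffords, directly supply canonical $AB|C$ generators in which GHZ triples and $BC$-Bells satisfy condition (1) and $AC$-Bells are matched as in condition (2), so $\#(AC)\le\#(BC)$ does imply the stated conditions. The direction you need for ``if'' --- that conditions (1)/(2), stated for \emph{some} choice of the non-unique canonical generators with the multiplication freedom, force $\#(AC)\le\#(BC)$ --- does not follow from counting correlated pairs alone (the $a_i^{AB}$ generators also contribute to $E(A|BC)$ and $E(B|AC)$), and the most economical way to close it is to show directly that conditions (1)/(2) enable trimming by measuring $g_k^C$ (respectively the products $g_k^Cg_{k'}^C$, $\bar g_k^C\bar g_{k'}^C$ after the condition-(2) regrouping) and then cite your own monotonicity bound. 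That is, the clean closure of your ``if'' direction essentially reproduces the paper's construction; once you accept that, your normal-form protocol becomes an illuminating but optional reformulation, while your $AC|B$ monotonicity argument should be kept as the rigorous version of the necessity proof.
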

\begin{proof}
If part: Suppose we have $\bar{g}^{AB}_k=I^A\otimes \bar{g}^B_k$ for any $k$ satisfying condition (1), the initial stabilizer is 
\begin{equation}\label{eqbs}
S_i=\braket{a_i^{AB}\otimes I^C,I^{AB}\otimes b_j^C, g_k^{AB}\otimes g_k^C,I^A \bar{g}_k^B\otimes\bar{g}_k^C}.
\end{equation}
To do entanglement trimming, we need to disentangle $C$ from the whole system. It can be realized by local measurement on $C$. Noticing the operator $g^C_k$ commutes all generator in $S_i$ except $I^A\bar{g}^B_k\otimes\bar{g}^C_k$, we thus can measure $g^C_k$, and a renewed stabilizer can be obtained (assuming measurement result $+1$ in all later discussions)
\begin{equation}
    S_f=\braket{a_i^{AB}\otimes I^C,I^{AB}\otimes b_j^C,g_k^{AB}\otimes I^C,I^{AB}\otimes g_k^{C}},
\end{equation}
where we see system $C$ becomes separable with $AB$ and the $A|B$ entanglement $E_f(A|B)$ is totally determined by those generators $\{a_i^{AB},g_k^{AB}\}$. Also, we know the $A|BC$ entanglement $E_i(A|BC)$ in $S_i$ is captured by the generators $\{a_i^{AB}\otimes I^C,g_k^{AB}\otimes g_k^C\}$ \cite{snotea}. Obviously, we have $E_i(A|BC)=E_f(A|B)$. 

If $S_i$ has some generators in the form of condition (2), the generator pairs can be rewritten as $\{I^Ag_{k^\prime}^B\otimes {g}_{k^\prime}^C,\bar{g}_k^A\bar{g}_{k^\prime}^B\otimes \bar{g}_k^C\bar{g}_{k^\prime}^C\}$ and $\{I^A\bar{g}_{k^\prime}^B\otimes \bar{g}_{k^\prime}^C,g_k^Ag_{k^\prime}^B\otimes g_k^Cg_{k^\prime}^C\}$, where we see, in each new paired generators, one is in the $A|BC$ local form. Condition (2) is thus reduced to the form of condition (1) with the only difference that the first generators in each new pair do not satisfy the exclusive commutation relation, specifically, $[I^Ag^B_{k^\prime},I^A\bar{g}^B_{k^\prime}]\neq 0$. Luckily, commutation of the two operators is not required since we can perform similar measurements on $C$, e.g., the operators $\bar{g}_k^C\bar{g}_{k^\prime}^C$, ${g}_k^C{g}_{k^\prime}^C$, to disentangle $C$, realizing the required entanglement trimming. This fact also applies in the following ``only if" proof.

Only if part: We first transform the stabilizer state into a product of un-entangled qubit states in system $C$ and a state with the remaining qubits maximally entangled with $AB$. As shown in the supplement \cite{supp}, this transformation can be accomplished by local Clifford gates on $C$ only, which obviously don't change entanglement. Thus we write down the stabilizer in a simpler canonical form (neglecting those unentangled qubits, $C$ is maximally entangled with $AB$)
\begin{equation}
    S_i=\braket{a_i^{AB}\otimes I^C,g_k^{AB}\otimes g_k^{C},\bar{g}_k^{AB}\otimes\bar{g}_k^{C}}.
\end{equation}
We then need to perform measurement on system $C$. If not commuting all generators in $S_i$, a Pauli measurement operator can always be made anti-commuting with only one generator in $S_i$ \cite{Nielsen2002}. Let's first assume the operator $g^C_{k=1}$ is measured on $C$, which only anti-commutes the generator $\bar{g}^{AB}_1\otimes\bar{g}^C_1$. The measurement leads to a generator replacement $\bar{g}^{AB}_1\otimes\bar{g}^C_1\rightarrow I^{AB}\otimes g^C_1$, where the new generator is local in terms of $AB|C$ (making $C$ not maximally entangled again) and can be similarly removed by local Clifford operations on $C$. Then Up to a generator multiplication, a new stabilizer can be obtained $\braket{a_i^{AB}\otimes I^C,g_1^{AB}\otimes I^{C},g_k^{AB}\otimes g_k^{C},\bar{g}_k^{AB}\otimes\bar{g}_k^{C}}_{k\neq1}$ \cite{nob}. This process can be repeated for all $k$ until we totally separate the system $C$
\begin{equation}
    S_f=\braket{a_i^{AB}\otimes I^C,g_k^{AB}\otimes I^{C}}.
\end{equation}
Comparing $S_i$ and $S_f$, it is obvious $\bar{g}^{AB}_k=I^A\bar{g}^B_k$ is needed for the equivalence $E_i(A|BC)=E_f(A|B)$.

Now a question to ask is whether a better Pauli measurement than $g^C_k$ exists. We now show the answer is negative. Let's assume a Pauli operator $p^C_1\neq g^C_{k=1}$ which only anti-commutes $\bar{g}^{AB}_1\otimes\bar{g}^C_1$ in $S_i$. The $p^C_1$ measurement and similar procedure yield a stabilizer $\braket{a_i^{AB}\otimes I^C,g_1^{AB}\otimes g_1^{C}p_1^C,g_k^{AB}\otimes g_k^{C},\bar{g}_k^{AB}\otimes\bar{g}_k^{C}}_{k\neq1}$, where $g^C_1p^C_1$ is not identity. Nevertheless, up to a correlated generator multiplication, the transform $g_1^{AB}\otimes g^C_1p^C_1\rightarrow \Tilde{g}_1^{AB}\otimes I^C$ must exists since the correlated generators must be paired in the canonical form \cite{supp}. However, $\Tilde{g}_1^{AB}$ is not guaranteed to be in a correlated form in terms of $A|B$, e.g., we might have $\Tilde{g}_1^{AB}=I^A\Tilde{g}^B_1$, reducing the entanglement that can be trimmed to $A|B$. 
\end{proof}


To better convey the idea, we now show some examples. First, for a GHZ state, stabilized by the generators $S=\braket{Z_AZ_BI_C,I_AZ_BZ_C,X_AX_BX_C}$, each party holds one of the three qubits. The entanglement is $1$ ebit between $A$ and $BC$. After a measurement of operator $I_AI_BX_C$, which anti-commutes $I_AZ_BZ_C$ and commutes with other generators in $S$, (also assuming $+1$ eigenvalue is obtained), we get a renewed stabilizer $S^\prime=\braket{Z_AZ_BI_C,I_AI_BX_C,{X_AX_BI_C}}$, where we see that $C$ is disentangled, and $A|B$ inherits the $1$ ebit entanglement from $A$ and $BC$. This is possible since we initially have a generator $I_A Z_BZ_C$, satisfying the trimming condition. Notice the GHZ state has a symmetric structure, which means we can perform measurement on any one of the three parties and preserve the entanglement in the remaining two. As a second example, for a tripartite four-qubit state with stabilizer $S=\braket{X_AI_BX_CI_C,Z_A I_BZ_CI_C,I_AZ_BI_CZ_C,I_AX_BI_CX_C}$, where $A|BC$ is entangled with $1$ ebit, and the two pairs $\{X_AI_B,Z_AI_B\}$ and $\{I_AX_B,I_AZ_B\}$ satisfy entanglement trimming condition (2). Thus we can rewrite the stabilizer $S=\braket{X_AZ_BX_CZ_C,I_AX_BI_CX_C,Z_AX_BZ_CX_C,I_AZ_BI_CZ_C}$. After the Pauli measurements, we disentangle the system $C$ and get a stabilizer state for $A|B$, stabilized by $\braket{X_AZ_B,Z_AX_B}$ preserving the $1$ ebit entanglement.

\textit{Qudit entanglement trimming}.---The bigger man principle can be extended to qudit stabilizer states. Analogously, one can define an $n$-qudit stabilizer state as
\begin{equation}\label{st1}
    \rho=\frac{1}{D^n}\sum_{g\in S}g,
\end{equation}
where $D$ is the qudit dimension, the stabilizer $S=\braket{g_1,g_2,...}$ and $g_i$ is a generalized Pauli generators for $n$ qudits \cite{Erik2005,Hayashi2017,Hayashi_2017}. For any generalized Pauli group defined on $n$ qudits, the group elements satisfy the relation $g_ig_j=g_jg_i\omega^{\lambda_{ij}}$,
where $\omega=e^{i2\pi/D}$ and $\lambda_{ij}$ is an integer in a ring $Z_D$. Equation (\ref{st1}) represents a stabilizer state since we have $S\rho=\rho$, and obviously it respects the condition tr$\rho=1$. In qubit case, whenever the number of stabilizer generator equals the number of qubits, its stabilizer state will be a pure state. However, for qudits, it is different since in general we have $tr\rho^2=\frac{\|S\|}{D^n}$, where the symbol $\|\cdot\|$ denotes the group size (note we use $\abs{\cdot}$ for group rank).
We see that $tr\rho^2=1$ if and only if $\|S\|=D^n$. In general for qudit, the size of a group with $n$ generators is not equal to $D^n$. Actually we have $\|S\|=\prod_i^nO_i\le D^n$,
where $O_i$ is the order of the group generator $g_i$ and it takes the value of either $D$ or the division of $D$. Obviously, a sufficient condition for $\rho$ to be pure is $D$ being a prime number, while for general $D$, $\rho$ defined by Eq. (\ref{st1}) with only $n$ generators is a mixed state \cite{qudits,Hayashi_2017,Vlad2011,Erik2005}. To have a pure state, we confine the discussion to prime dimension (later, we will extend the discussion to composite square free dimension \cite{enotc}). Similar to the qubit case, to evaluate the entanglement, we take the partition $A|B$ with $n_A+n_B=n$. The stabilizer is decomposed into the local subgroups $S_A=\braket{g_i^A\otimes I^B}_{i=1,...,\abs{S_A}}$, $S_B=\braket{I^A\otimes g_j^B}_{j=1,...,\abs{S_B}}$ and the correlated subgroup $S_{AB}=\braket{g_k^A\otimes g_k^B}_{k=1,...,\abs{S_{AB}}}$ with $\abs{S_A}+\abs{S_B}+\abs{S_{AB}}=n$.
By tracing out the sytem $A$, we have a reduced density matrix
\begin{equation}
    \rho_B=\frac{1}{D^{n_B}}\sum_{g\in S_B}g.
\end{equation}
Noticing $\frac{1}{\|{S_B}\|}\sum_{g\in S_B}g$ is a projector onto the subspace with dimension $D^{n_B}/||S_B||$, the von Neumann entropy is thus evaluated to be
$ E=n_B\log_2D-\log_2\|{S_B}\|$.
Since $\rho$ is pure, we also have
$E=n_A\log_2D-\log_2\|{S_A}\|$ by tracing over system $B$.
Combining the two expressions, we find $E=\frac{1}{2}\log_2\|S_{AB}\|$.
Notice $D$ is prime, the entanglement further takes the form $E=\frac{1}{2}\abs{S_{AB}}\log_2D$, obviously reducing to the qubit case when $D=2$. The result indicates the bipartite entanglement of a qudit stabilizer state is also determined by the number of correlated generators in the stabilizer group.

\begin{lemma}\label{th3}
---\textbf{Canonical decomposition for qudit}:
Denote a stabilizer $S=\braket{g_1,g_2,...,g_n}$ with $n$ generators for a pure $n$-qudit stabilizer state, then for any bi-partition $A|B$, its generators can always be brought into the canonical form:
\begin{equation}
    S=\braket{a_i\otimes I_B,I_A\otimes b_j,g_k^A\otimes g_k^B,\Bar{g}_k^A\otimes\Bar{g}_k^B}
\end{equation}
where the first two subsets generate $S_A$ and $S_B$, responsible for local information, and the last two generate $S_{AB}$, responsible for correlation. The generators of $S_{AB}$ collect into $E=\abs{S_{AB}}/2$ pairs $\{g_k=g_k^A\otimes g_k^B,\Bar{g}_k=\Bar{g}_k^A\otimes\Bar{g}_k^B\}$, where $g_k^A$ $(g_k^B)$ commutes with all canonical generators of $S$ except $\Bar{g}_k$, and $\bar{g}_k^A$ $(\bar{g}_k^B)$ commutes with all canonical generators of $S$ except $g_k$.
\end{lemma}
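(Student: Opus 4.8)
The plan is to recast the argument behind Lemma~\ref{the1} in the phase-space picture and run it essentially verbatim over the field $\mathbb{Z}_D$, which is legitimate precisely because $D$ is prime. Modulo phases, the generalized Pauli group on $n$ qudits is the symplectic vector space $\mathbb{Z}_D^{2n}$ over the field $\mathbb{Z}_D$, whose alternating form $\lambda$ is read off from $g_ig_j=g_jg_i\,\omega^{\lambda_{ij}}$; every nontrivial element has order $D$, and a \emph{pure} $n$-qudit stabilizer state corresponds to a Lagrangian subspace $V\subset\mathbb{Z}_D^{2n}$, i.e.\ $V=V^{\perp_\lambda}$ with $\dim V=n$. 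The bipartition splits $\mathbb{Z}_D^{2n}=\mathbb{Z}_D^{2n_A}\oplus\mathbb{Z}_D^{2n_B}$ and $\lambda=\lambda_A\oplus\lambda_B$; writing $V_A:=V\cap(\mathbb{Z}_D^{2n_A}\oplus 0)$ and $V_B:=V\cap(0\oplus\mathbb{Z}_D^{2n_B})$ for the phase-space images of $S_A,S_B$, the statement to prove becomes: $V$ has a basis consisting of a basis of $V_A$, a basis of $V_B$, and $E=\abs{S_{AB}}/2$ hyperbolic pairs whose $A$-parts (and, separately, $B$-parts) have the exclusive commutation property.

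The core step is to isolate the correlated sector. Let $p_A,p_B$ be the coordinate projections. Isotropy of $V$ gives $\lambda_A(a^A,h^A)=\lambda(a,h)=0$ for every $a\in V_A$ and $h\in V$, so $\lambda_A$ descends to an alternating form $\bar\lambda_A$ on $W:=p_A(V)/V_A\;(\cong V/(V_A\oplus V_B))$; likewise $\lambda_B$ descends to $\bar\lambda_B$ on the same quotient, and isotropy forces $\bar\lambda_B=-\bar\lambda_A$. The one nontrivial point—and the real obstacle—is that $\bar\lambda_A$ is \emph{nondegenerate}, equivalently $p_A(V)^{\perp_{\lambda_A}}=V_A$. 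This is exactly where the Lagrangian property (hence primality of $D$) does the work: if $x\in\mathbb{Z}_D^{2n_A}$ satisfies $\lambda_A(x,p_A(V))=0$, then $(x,0)\perp_\lambda V$, so $(x,0)\in V^{\perp_\lambda}=V$, hence $(x,0)\in V_A$; therefore the radical of $\bar\lambda_A$ is $(p_A(V)\cap V_A)/V_A=0$.

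Given nondegeneracy, the remainder is standard linear algebra over a field. Then $W$ is even-dimensional, $\dim W=2E=\abs{S_{AB}}$, and a nondegenerate alternating form admits a symplectic basis $\{e_1,f_1,\dots,e_E,f_E\}$ with $\bar\lambda_A(e_k,f_l)=\delta_{kl}$ and $\bar\lambda_A(e_k,e_l)=\bar\lambda_A(f_k,f_l)=0$ (rescaling one member of each pair to normalize the nonzero value). Lift the $e_k,f_k$ to any $g_k,\bar g_k\in V$ and adjoin bases $\{a_i\}$ of $V_A$ and $\{b_j\}$ of $V_B$; these generate $S$ in the stated canonical form. Exclusive commutation is then checked componentwise: $\lambda_A(g_k^A,a_i^A)=0$ by isotropy, $\lambda_A(g_k^A,b_j^A)=0$ since $b_j^A=0$, and $\lambda_A(g_k^A,g_l^A)=\bar\lambda_A(e_k,e_l)=0$, $\lambda_A(g_k^A,\bar g_l^A)=\bar\lambda_A(e_k,f_l)=\delta_{kl}$, so $g_k^A$ commutes with every canonical generator except $\bar g_k$; replacing $\bar\lambda_A$ by $\bar\lambda_B=-\bar\lambda_A$ gives the same for $g_k^B$, and symmetrically for $\bar g_k^A,\bar g_k^B$. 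Consistency is automatic since $\lambda(g_k,\bar g_k)=\bar\lambda_A(e_k,f_k)+\bar\lambda_B(e_k,f_k)=1-1=0$. I would close by noting that for composite $D$ this collapses—the state need not be pure and $V$ need not be Lagrangian—which is why the extension is carried out only for square-free $D$, by decomposing $\mathbb{Z}_D$ into its prime factors and applying the above to each factor.
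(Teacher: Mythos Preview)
Your argument is correct and essentially coincides with the symplectic-geometric proof the paper develops for the continuous-variable case (and explicitly remarks carries over to prime-$D$ qudits). The paper's \emph{dedicated} qudit proof, however, is packaged more combinatorially: it works directly with generator multiplication, first using a pigeonhole argument to show that in the projected group $P_A(S_{AB})$ a maximal Abelian subset has rank at least half the total, then arranging the complementary generators to commute among themselves, and finally establishing that the center of $P_A(S_{AB})$ is trivial by a dimension count ($|S_A|+|S_B|+2z+2p\le n$ together with $|S_A|+|S_B|+z+2p=n$ forces $z=0$). Your Lagrangian one-liner $(x,0)\in V^{\perp_\lambda}=V\Rightarrow(x,0)\in V_A$ replaces this count cleanly, and invoking symplectic Gram--Schmidt on the nondegenerate quotient replaces the paper's explicit generator-by-generator pairing (its Lemmas on compatibility/incompatibility indices). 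Both routes pivot on the same fact---trivial radical of the induced alternating form on the correlated sector---so the difference is presentation rather than substance; your quotient-and-Lagrangian phrasing is the linear-algebra packaging of the paper's group-theoretic manipulations.
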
 
As in the qubit case \cite{Fattal2004}, the proof of this lemma explores the transformation of group generators,
and the detail is given in the supplementary material \cite{supp}. By analyzing the correlated generators in the decomposition, it turns out that a similar entanglement trimming theorem as in qubit case can be constructed in a similar way. The differences are: when we perform a generalized Pauli measurement, the result could be $\omega^\lambda$ with $\lambda=0,1,...,D-1$, and we should replace the stabilizer elements accordingly \cite{Daniel1998}; the generalized Pauli generators satisfy $g^D=I$, which requires us to take the multiplication several times to get identity operator when we renew the generators. We summarize the theorem in the following:
\begin{theorem}
---\textbf{the bigger man principle for qudit stabilizer state}: given an $n$-qudit stabilizer state with tri-partition $\rho^{ABC}$, and its stabilizer generators are in the canonical form $S=\braket{a_i^{AB}\otimes I^C,I^{AB}\otimes b_j^C,g_k^{AB}\otimes g_k^{C},\bar{g}_k^{AB}\otimes\bar{g}_k^{C}}$. The $A|BC$ entanglement can be reduced to $A|B$ by local $C$ measurement if and only if, (1) for given $k$, one of the paired operators is local, e.g., $g_k^{AB}=I^Ag_k^B$, or (2) there are pairs of operators in the form, e.g., $\{I^Ag_k^B\otimes g_k^C,I^A\bar{g}_k^B\otimes \bar{g}_k^C,g^A_{k^\prime}I^B\otimes g_{k^\prime}^C,\bar{g}^A_{k^\prime}I^B\otimes \bar{g}_{k^\prime}^C\}$.  
\end{theorem}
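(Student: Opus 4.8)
The plan is to mirror the qubit proof of Theorem~1 step by step, replacing the binary commutation bookkeeping by arithmetic in the ring $Z_D$ and invoking Lemma~\ref{th3} wherever the qubit argument used Lemma~\ref{the1}. Throughout we assume, as stated, the canonical decomposition of Lemma~\ref{th3}, the entanglement count $E=\tfrac12\abs{S_{AB}}\log_2 D$ for prime $D$, and the Clifford reduction on $C$ established in the supplement \cite{supp}.

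\emph{If part.} Suppose condition (1) holds, so that after relabelling we may take $\bar g_k^{AB}=I^A\otimes\bar g_k^B$ for each flagged pair, giving $S_i=\braket{a_i^{AB}\otimes I^C,\ I^{AB}\otimes b_j^C,\ g_k^{AB}\otimes g_k^C,\ I^A\bar g_k^B\otimes\bar g_k^C}$. By the exclusive commutation relation of Lemma~\ref{th3}, $g_k^C$ fails to commute only with $I^A\bar g_k^B\otimes\bar g_k^C$, and the phase it picks up is a primitive $D$-th root of unity since $D$ is prime. Measuring $g_k^C$ on $C$ therefore replaces that single generator; because $(g_k^C)^D=I$ one multiplies the measured operator into the old generator the appropriate number of times (read off from the $Z_D$-valued commutator) to produce the replacement $I^{AB}\otimes g_k^C$, which is local in the $AB|C$ cut. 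Doing this for all $k$ gives $S_f$ in which $C$ is a product with $AB$, whose $A|B$ correlated generators are exactly $\{g_k^{AB}\}$; counting them and using $E=\tfrac12\abs{S_{AB}}\log_2 D$ yields $E_f(A|B)=E_i(A|BC)$. For condition (2) one first multiplies the two flagged pairs against each other exactly as in the qubit case, so that one member of each resulting pair becomes $A|BC$-local; the resulting pair no longer satisfies the exclusive commutation relation, but this is harmless since the disentangling measurements on $C$ (of $\bar g_k^C\bar g_{k'}^C$, $g_k^Cg_{k'}^C$, etc.) can still be carried out.

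\emph{Only if part.} Using the supplement's Clifford reduction on $C$ we may assume the unentangled qudits of $C$ are stripped off, so $S_i=\braket{a_i^{AB}\otimes I^C,\ g_k^{AB}\otimes g_k^C,\ \bar g_k^{AB}\otimes\bar g_k^C}$ with $C$ maximally entangled with $AB$. Any Pauli measurement on $C$ not commuting with all generators can be arranged (because $Z_D$ is a field) to fail to commute with exactly one generator; measuring $g_1^C$ sends $\bar g_1^{AB}\otimes\bar g_1^C\to I^{AB}\otimes g_1^C$, local in $AB|C$ and removable by a further Clifford on $C$, leaving $\braket{a_i^{AB}\otimes I^C,\ g_1^{AB}\otimes I^C,\ g_k^{AB}\otimes g_k^C,\ \bar g_k^{AB}\otimes\bar g_k^C}_{k\ne1}$ up to generator multiplication. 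Iterating over $k$ produces $S_f=\braket{a_i^{AB}\otimes I^C,\ g_k^{AB}\otimes I^C}$, and comparing with $S_i$ forces $\bar g_k^{AB}=I^A\bar g_k^B$ for $E_i(A|BC)=E_f(A|B)$. If instead a generic Pauli $p_1^C\ne g_1^C$ anticommuting with $\bar g_1^{AB}\otimes\bar g_1^C$ is measured, the same procedure produces $g_1^{AB}\otimes g_1^Cp_1^C$, which by the pairing guaranteed by Lemma~\ref{th3} can be brought to $\tilde g_1^{AB}\otimes I^C$ up to a correlated multiplication; but $\tilde g_1^{AB}$ need not be $A|B$-correlated, so no measurement does better.

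\emph{Expected main obstacle.} The delicate point is the arithmetic of order-$D$ generators: unlike qubits, the measurement outcome is some $\omega^\lambda$ and the ``anticommutation'' is a $Z_D$-valued phase $\lambda_{ij}$, so each generator-renewal step requires multiplying the measured operator in $\lambda_{ij}^{-1}\bmod D$ times — which is exactly where primality of $D$ is essential, guaranteeing the inverse exists, that a single anticommuting generator can always be selected, and that the entanglement count $\tfrac12\abs{S_{AB}}\log_2D$ behaves additively under the bookkeeping. Verifying that multiplying correlated pairs together preserves the canonical structure of Lemma~\ref{th3} up to the $Z_D$-module action, and that the supplement's Clifford reduction carries over verbatim to prime-dimensional qudits, is the remaining routine work; the composite square-free case then follows from the Chinese-remainder factorization indicated in \cite{enotc}.
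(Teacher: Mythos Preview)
Your proposal is correct and follows essentially the same approach as the paper: the paper does not give a standalone proof of the qudit theorem but explicitly states that ``a similar entanglement trimming theorem as in qubit case can be constructed in a similar way,'' noting only the two modifications you identified --- measurement outcomes $\omega^\lambda$ with $\lambda\in Z_D$, and the need to multiply generators several times because $g^D=I$. Your write-up is in fact more detailed than the paper's own treatment, and your ``expected main obstacle'' paragraph correctly pinpoints the role of primality in guaranteeing invertibility in $Z_D$, which is precisely why the paper restricts to prime (and then square-free via Chinese remainder) dimension.
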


Besides qudit with prime dimension, the bigger man principle can be applied to qudit stabilizer states with composite square-free dimension, which is a direct result of the Chinese remainder isomorphism. As first discussed in Ref.~\cite{Shiang2011}, any qudit stabilizer state of composite dimension can be decomposed into tensor product of several stabilizer states with prime power dimension. Thus for a square-free dimensional stabilizer state, the entanglement trimming can still be done within the above formalism if each component in the tensor product satisfies the trimming condition. 
Interesting, when the condition is not satisfied, the entanglement trimming is still possible but can no longer be described by stabilizer formalism (see Ref.~\cite{supp} for more details). Furthermore, the principle is also true for continuous variable stabilizer state. As an extension of discrete variable Pauli group, the generalized Pauli group for continuous variables is called Heisenberg-Weyl (HW) group, composed of all phase space displacements \cite{Richard2004,Kok2010}. Interestingly, one can build up a correspondence between the HW group and a symplectic vector space, and each group element is represented by a symplectic vector. Moreover, an Abelian subgroup corresponds to a vector subspace with all elements being symplectic orthogonal, which defines a Lagrangian plane \cite{Gosson2006}. This vector subspace, rather than the commuting displacement operators, enables us to concisely formulate the canonical decomposition of continuous variable stabilizer generators. Detailed in the supplement \cite{supp}, we show that the vector subspace also wears a canonical form, with which continuous variable bigger man principle can be established.

\textit{Outlook}.---There are several interesting questions to be answered in the future. First, we studied the tripartite entanglement trimming in this paper, it would be worthwhile to explore the trimming condition for more parties, which is more practical in real quantum networks. Also, The principle we discussed is based on pure stabilizer states. While interesting as it is, we might often encounter mixed entangled state. In fact, we can define a mixed stabilizer state similarly \cite{Koenraad2005}. As we mentioned before, the stabilizer state defined in Eq. (\ref{eqs1}), or Eq. (\ref{st1}) will be mixed when the number of generator is not chosen properly compared to the number of qubits (qudits). In these cases, the same canonical decomposition of the stabilizer generators can be obtained, and we can claim a same ``bigger man principle" if we only require von Neumann entropy to be equal before and after the local opeartion on system $C$. However, since the von Neumann entropy for the reduced density operator of a mixed state can't faithfully measure the entanglement, it could be more involving to get a similar ``bigger man principle", e.g., possibly introducing other mixed state entanglement measure \cite{Charles1996}, which is an interesting and important research topic. Furthermore, the stabilizer state, although significant in the quantum application, forms only a subset of all quantum states. The search for the bigger man principle of general entangled state should be an exciting work to do in the future.

\begin{acknowledgments}
 C.Z. thanks Filip Rozpedek and Wenlong Ma for helpful discussions. We acknowledge support from the ARO (W911NF-18-1-0020, W911NF-18-1-0212), ARO MURI (W911NF-16-1-0349), AFOSR MURI (FA9550-19-1-0399), NSF (EFMA-1640959, OMA-1936118, EEC-1941583), NTT Research, and the Packard Foundation (2013-39273). 
\end{acknowledgments}

\bibliographystyle{apsrev4-1}
\bibliography{err}


\newpage
\section*{Supplementary material for ``Entanglement trimming in stabilizer formalism"}

\begin{appendix}

\section{The qudit stabilizer formalism}

\subsection{Generalized Pauli group for qudit}

The Pauli group defined on qubits plays a vital role in quantum error correction. Here we briefly review the generalization to qudits with dimension $D$. The qubit case is thus just a special case when $D=2$. The Pauli $X$ and $Z$ operator defined on a qudit with $D$ dimension is given by
\begin{equation}
    X=\sum_{j=0}^{D-1}\ket{j}\bra{j+1}, Z=\sum_j\omega^j\ket{j}\bra{j}
\end{equation}
with $\omega=\exp(i2\pi/D)$ and the arithmetic is defined on a modular $D$ ring. Obviously we have $X^D=Z^D=I$ and $XZ=\omega ZX$. For $n$ qudits, we can define a Pauli product 
\begin{equation}
    p(\bm{v})=\omega^{\lambda/2}X_1^{x_1}Z_1^{z_1}...X_n^{x_n}Z_n^{z_n},
\end{equation}
where $\bm{v}=\{x_1,x_2,...;z_1,z_2...\}$ is a $2n$ dimensional vector and $\lambda\in Z_{2D}$. The collection of all possible Pauli products forms a generalized Pauli group $P_n$. For each $p\in P_n$, we have $p^D=I,-I$. It is well known that $-I$ is excluded in any stabilizer code, so only the Pauli product with $P^D=I$ will be included in later discussions. As a remark, when $D$ is prime, the vector representation of the Pauli product corresponds to a symplectic vector space on the field $Z_D$, which provides us a powerful tool to study the generalized Pauli group. In the later section, we give more discussions on Heisenberg-Weyl group in the symplectic space formalism. 

\subsection{Qudit Clifford gates}

The qudit Fourier gate is a generalization of the Hadamard gate, given by
\begin{equation}
    F=\frac{1}{\sqrt{D}}\sum_{j=0}^{D-1}\omega^{jk}\ket{j}\bra{k},
\end{equation}
which satisfies
\begin{equation}
    FZF^\dagger=X,FXF^\dagger=Z^{-1}.
\end{equation}

A multiplicative gate $Q^\alpha$ can be defined for an invertible integer $\alpha\in Z_D$
\begin{equation}
    Q^\alpha=\sum_{j=0}^{D-1}\ket{j}\bra{\alpha j},
\end{equation}
with the relation
\begin{equation}
    Q^\alpha Z {Q^\alpha}^\dagger=Z^\alpha, Q^\alpha X {Q^\alpha}^\dagger=X^{\bar{\alpha}},
\end{equation}
where $\bar{\alpha}$ is the inverse of $\alpha$ in $Z_D$. For qubit case with $D=2$, $\alpha=1$ is the only invertible integer and $Q^{\alpha=1}$ is just an identity operator.

A phase gate for qudit is defined according to the dimension $D$ being even or odd
\begin{equation}
\begin{split}
    W_\text{even}&=\sum_{j=1}^{D-1}\omega^{-j(j+2)/2}\ket{j}\bra{j},\\
    W_\text{odd}&=\sum_{j=1}^{D-1}\omega^{-j(j+1)/2}\ket{j}\bra{j},
\end{split}
\end{equation}
which satisfy
\begin{equation}
    WZW^\dagger=Z,WXW^\dagger=XZ
\end{equation}
for $D$ being odd, and
\begin{equation}
    WZW^\dagger=Z,WXW^\dagger=\omega^{1/2}XZ
\end{equation}
for $D$ being even.

The two-qudit gates include the control-Z (C-Phase) and control-X (CNOT)
\begin{equation}
    \begin{split}
        CZ&=\sum_{j=0}^{D-1}\ket{j}\bra{j}_1\otimes Z_2^j,\\
        CX&=\sum_{j=0}^{D-1}\ket{j}\bra{j}_1\otimes X_2^j.
    \end{split}
\end{equation}
The two gates are related by a Fourier gate on the second qudit $CX=(I_1\otimes F_2)CZ(I_1\otimes F_2)^\dagger$. The CNOT gate satisfies
\begin{equation}
\begin{split}
    CX(I_1\otimes Z_2)CX^\dagger&=Z_1Z_2,\\
    CX(Z_1\otimes I_2)CX^\dagger&=Z_1I_2,\\
    CX(I_1\otimes X_2)CX^\dagger&=I_1X_2,\\
    CX(X_1\otimes I_2)CX^\dagger&=X_1X^{-1}_2
\end{split}
\end{equation}

\subsection{Tripartite stabilizer state}

As described in the main text, a $n$-qudit stabilizer state is defined as 
\begin{equation}
    \rho=\frac{1}{D^n}\sum_{g\in S}g,
\end{equation}
where $S$ is the set of stabilizers. For a prime $D$, we could have exactly $n$ generators, and they can be written in the form
\begin{equation}\label{eqs2}
\begin{split}
    S&=\braket{g_1,g_2,...,g_n}\\
     &=\braket{a_i^{AB}\otimes I^C,I^{AB}\otimes b_j^C,c^{AB}_k\otimes c^C_k},
\end{split}
\end{equation}
with the partition $AB|C$. The first two sets generate the local group $S_{AB},S_C$, while the last sets determines the correlated group $S_{AB|C}$. The generators can be further simplified without changing the entanglement. To show that, the following theorem is needed.
\begin{theorem}
Given any stabilizer generator $p(\bm{v})$ (except the one with identity on the first qudit) on $n$ qudits with prime dimension, it is possible to have a Clifford unitary transformation $U$, such that 
\begin{equation}
    Up(\bm{v})U^\dagger=X_1I_2I_3...I_n
\end{equation}
\end{theorem}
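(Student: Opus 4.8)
The plan is to build the unitary $U$ as a composition of single-qudit Clifford gates together with at most one multiplicative gate, using the vector representation $\bm{v} = \{x_1,\dots,x_n; z_1,\dots,z_n\}$ and the fact that $D$ is prime, so that $Z_D$ is a field. First I would reduce the problem on each individual qudit: on qudit $j$, the local Pauli content $X_j^{x_j} Z_j^{z_j}$ can be rotated by a Fourier gate $F_j$ (which swaps the roles of $x_j$ and $z_j$ up to sign) and by the phase gate $W_j$ (which shears $x_j$ into $x_j + z_j$ type combinations), exactly as in the single-qubit case where $H$ and $S$ generate the symplectic group $SL(2,\mathbb{Z}_2)$. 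Since $D$ is prime, the single-qudit Clifford group acts on the pair $(x_j, z_j)$ through all of $SL(2,\mathbb{Z}_D)$ — this is where $F$, $W$ and the multiplicative gate $Q^\alpha$ are all needed — so any nonzero vector $(x_j,z_j)$ can be sent to $(1,0)$, i.e. to $X_j$, and the zero vector stays zero. Applying this on every qudit $j$ where $(x_j,z_j)\neq(0,0)$ turns $p(\bm v)$ into $\omega^{\lambda'/2} X_{j_1} X_{j_2}\cdots X_{j_m}$ for some subset $\{j_1,\dots,j_m\}$, which is nonempty because by hypothesis the operator is not supported on qudit $1$ alone with identity there — in fact after relabelling we may assume $j_1 = 1$, or if $1\notin\{j_1,\dots,j_m\}$ we still have $x_1$ or $z_1$ possibly zero, so I would be slightly more careful: the hypothesis "except the one with identity on the first qudit" guarantees $(x_1,z_1)\neq(0,0)$, hence after the single-qudit step qudit $1$ carries $X_1$.

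Next I would eliminate the $X$ operators on the other qudits $j_2,\dots,j_m$ using the two-qudit CNOT gates. From the relation $CX(X_1\otimes I_2)CX^\dagger = X_1 X_2^{-1}$ in the excerpt, a CNOT with control $1$ and target $j_\ell$ (followed, if needed, by a single-qudit $Q$ to fix the exponent sign, since we want to cancel $X_{j_\ell}$ rather than add another copy) maps $X_1 X_{j_\ell}$ to $X_1 I_{j_\ell}$ while leaving identity factors on other qudits untouched and not reintroducing $Z$'s on those qudits. Iterating over $\ell = 2,\dots,m$ clears all the spectator $X$'s, leaving $\omega^{\lambda''/2} X_1 I_2\cdots I_n$. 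Finally, since $p(\bm v)^D = I$ was assumed (we excluded the $-I$ branch), the residual phase $\omega^{\lambda''/2}$ must itself be a $D$-th root of unity compatible with $X_1^D = I$, which forces $\lambda'' \equiv 0$; one can also absorb any surviving phase by composing with a suitable power of $Z_1$ conjugation, which does not change $X_1$ up to the same phase bookkeeping. Collecting all these gates into a single Clifford $U$ gives $U p(\bm v) U^\dagger = X_1 I_2 \cdots I_n$.

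The main obstacle I anticipate is the careful phase accounting: each of $F$, $W$, $Q^\alpha$ and $CX$ introduces $\omega^{1/2}$-type factors (note the even-versus-odd $D$ distinction in the phase gate relations $WXW^\dagger = \omega^{1/2}XZ$), and one must verify that the accumulated phase is consistent with the constraint $p^D = I$ so that it can be set to $1$ rather than to a nontrivial root of unity — equivalently, that we never accidentally land on $-X_1$ or $\omega^{k/2}X_1$ with $k$ odd. A clean way to handle this is to track everything purely in the symplectic vector picture (where phases are quotiented out) to establish the $X_1 I_2\cdots I_n$ form up to phase, and then observe that the only Hermitian, order-$D$ Pauli product with that symplectic vector is $X_1 I_2\cdots I_n$ itself. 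The secondary subtlety is making sure the single-qudit Cliffords really generate all of $SL(2,\mathbb Z_D)$ for prime $D$: this follows because $F$ gives the order-$4$ element $\left(\begin{smallmatrix}0&-1\\1&0\end{smallmatrix}\right)$, $W$ gives a transvection $\left(\begin{smallmatrix}1&0\\1&1\end{smallmatrix}\right)$ (up to the even-$D$ phase), and together these two generate $SL(2,\mathbb Z_D)$, with $Q^\alpha$ supplying the diagonal scalings needed to normalize the leading entry to $1$.
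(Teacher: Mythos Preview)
Your proposal is correct and follows essentially the same three-step route as the paper's proof: (i) use single-qudit Cliffords ($F$, $W$, $Q^\alpha$) to reduce each nontrivial local factor $X_j^{x_j}Z_j^{z_j}$ to $X_j$; (ii) use CNOTs controlled on qudit~1 to clear the remaining $X_{j_\ell}$'s; (iii) absorb the residual phase with $Z$ gates. Your framing via the transitive action of $SL(2,\mathbb Z_D)$ on nonzero vectors and your more explicit phase bookkeeping are refinements rather than a different strategy, and the extra $Q$ gate you insert after the CNOT is in fact unnecessary since $CX(X_1X_2)CX^\dagger = X_1X_2^{-1}\cdot X_2 = X_1$ already cancels cleanly.
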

This theorem is discussed in Ref.~\cite{Nielsen2002,Shiang2011}. We give the proof of this theorem here for the purpose of completeness.
\begin{proof}
Given a stabilizer generator in the form $p(\bm{v})=e^{i\phi} X_1^{x_1}Z_1^{z_1}...X_n^{x_n}Z_n^{z_n}$, satisfying $p^D=I$. Knowing the first qudit is not identity, the following transformation for the first qudit can be made: $X_1^{x_1}Z_1^{z_1}\xrightarrow{\text{W gates}}X_1^{x_1}\xrightarrow{Q^\alpha\text{gates}}X_1.$ Note if initially $x_1=0$, we can apply Fourier gate to the first qudit such that $x_1$ becomes nonzero. The above procedure can be repeated for all other qudits if they are not identity already. Then we will get a result $X_1X_2...X_n$. The last step is to apply CNOT gate with first qudit as control until we get $X_1I_1...I_n$. Note there might be a phase factor which can always compensated by applying $Z$ gates. Collecting all gates used gives us the Clifford unitary $U$ for the transformation. 
\end{proof}
This theorem can be directly used to locally transform the stabilizer state in Eq.~(\ref{eqs2}). For example, given the local generator $I^{AB}\otimes b^C_j$ with any $j$, local Cliffords on $C$ can transform it to $I^{AB}\otimes (I...IX)^C$. Notice these local Clifford gates will also change other generators, but no matter how they change, the operator corresponding to the last qudit will either be $I$ or $X$, since all generators must commute. By generator multiplication, the last qudit can be shown as separated $S=\braket{a_i^{AB}\otimes I^{C},I^{AB}\otimes \tilde{b}_j^{C},c^{AB}_k\otimes \tilde{c}^{C}_k}\otimes\braket{X}$, where we use $\tilde{b},\tilde{c}$ to denote the difference upon the transformation. Notice the number of local generators in $I^{AB}\otimes\tilde{b}^C_j$ is now reduced by one. This process can be done repeatedly until the local generators are all separated
\begin{equation}
    S=\braket{a_i^{AB}\otimes I^{C},c^{AB}_k\otimes \tilde{c}^{C}_k}\otimes\braket{X_1X_2...},
\end{equation}
where the generators in the first braket actually defines a new stabilizer state with the system $C$ maximally entangled with $AB$. Since separable qudits doesn't affect the entanglement, the new stabilizer state can be used to study the entanglement structure, as we did in the main text.

\section{Entanglement trimming for square-free dimensional qudits stabilizer state}

As indicated in Ref.~\cite{Shiang2011}, any stabilizer states defined on square-free dimensional qudits can be further decomposed to tensor product of stabilizer states with prime dimension. Specifically, for a stabilizer state $\ket{\Psi}$ with the qudit dimension $D=d_1d_2...d_i...$, we have
\begin{equation}
    \ket{\Psi}=\ket{\psi_1}\otimes\ket{\psi_2}\otimes...\ket{\psi_i}...,
\end{equation}
where $\ket{\psi_i}$ are decomposed stabilizer states with prime dimension $d_i$. Thus for each state $\ket{\psi_i}$, the corresponding stabilizer generators could have a canonical form in terms of any bipartition. This means for any tripartition of the state $\ket{\Psi}^{ABC}$, we can first do entanglement trimming within each decomposed stabilizer state $\ket{\psi_i}^{ABC}$. If the entanglement trimming principle is satisfied for each state, then all $A|BC$ entanglement can be preserved in the system $AB$ while separating the system $C$. 

The situation is more subtle when the condition is violated for some stabilizer states in the tensor product. It might be still possible to realize the entanglement trimming, but non-Clifford operations would be needed, which is out of the range of stabilizer formalism. Although it is not the focus of current paper, we still illustrate it by a simple example: given a three-qudit ($D=6$) stabilizer state, the stabilizer generator is given by
\begin{equation}
   S=\braket{X_AI_BZ_C^3,I_AX_BZ_C^2,Z_A^3Z_B^2X_C}_\text{qudit}.
\end{equation}
Since $D=2\times3$, the above state can be decomposed into a three-qubit and a three-qutrit stabilizer states. Using the mapping: 
\begin{equation}
\begin{split}
&X_{D=6}\rightarrow X_{d_1=2}\otimes X_{d_2=3},\\ &Z_{D=6}\rightarrow Z^{r_1}_{d_1=2}\otimes Z^{r_2}_{d_2=3}, r_i=\frac{D}{d_i}\text{ mod }d_i,
\end{split}
\end{equation}
we can write down their stabilizer generators as
\begin{equation}
\begin{split}
    S=&\braket{X_AI_{B}Z_C,I_AX_BI_C,Z_AI_BX_C}_\text{qubit}\\
    &\otimes\braket{X_AI_BI_C,I_AX_BZ_C,I_AZ_BX_C}_\text{qutrit}.
\end{split}    
\end{equation}
We see each stabilizer state has the generators in the canonical form. The first one contains a Bell state with one ebit $A|C$ entanglement, while the second one contains a EPR state with one etrit $B|C$ entanglement. Thus, the entanglement trimming still can be realized by teleportating the qubit in $C$ to $B$ with the entangled qutrits as a resource. The final state will have one ebit entanglement between the qubit in $A$ and the qutrit in $B$, which is obviously not a stabilizer state any more.

\section{Proof of canonical decomposition of generators for qudit stabilizer state}

We now present the proof of the lemma 2 in the main text by proving the following three lemmas.

\begin{lemma}\label{lm1}
Given a generalized Pauli group $G$ defined on qudits, and denote $C$ as its maximal Abelian subgroup generated by $c(G)$ generators $g_j$, we have
\begin{equation}
    c(G)=\abs{C}\ge\frac{\abs{G}}{2},
\end{equation}
where $c(G)$ is called group compatibility index.
\end{lemma}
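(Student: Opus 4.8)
The plan is to pass from the group $G$ to its vector-space (symplectic) representation and reduce the claim to a statement about isotropic subspaces. Recall that, modulo phases, a generalized Pauli group on $n$ qudits is faithfully represented in $Z_D^{2n}$, with group multiplication becoming addition and the (non)commutation data $g_ig_j=\omega^{\lambda_{ij}}g_jg_i$ becoming the value $\lambda_{ij}=\langle v_i,v_j\rangle$ of the standard alternating symplectic form $\langle\cdot,\cdot\rangle$. Under this correspondence $G$ is sent to a submodule $V\subseteq Z_D^{2n}$ with $\abs{G}=\dim V$ (its minimal number of generators), an Abelian subgroup is sent to an \emph{isotropic} submodule (one on which the form vanishes identically), and a maximal-rank Abelian subgroup $C$ corresponds to a maximal isotropic submodule $W\subseteq V$; thus $c(G)=\abs{C}=\dim W$ and the claim becomes $\dim W\ge\tfrac12\dim V$.

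Second, I would analyze the restriction of the form to $V$. Let $R=V\cap V^{\perp}$ be its radical, i.e. the set of $v\in V$ whose group counterpart commutes with every element of $G$. The form descends to a \emph{non-degenerate} alternating form on the quotient $V/R$. For $D$ prime this is a genuine symplectic form over the field $Z_D$, so $\dim(V/R)$ is even, say $2m$, and every maximal isotropic subspace of $V/R$ is Lagrangian of dimension $m$. A maximal isotropic $W\subseteq V$ necessarily contains $R$ (adjoining $R$ to an isotropic set keeps it isotropic), and its image $W/R$ is a maximal isotropic, hence Lagrangian, subspace of $V/R$; therefore $\dim W=\dim R+m=(\dim V-2m)+m=\dim V-m$. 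Since $2m=\dim(V/R)\le\dim V$ we get $\dim W=\dim V-m\ge\tfrac12\dim V$, which is exactly $c(G)=\abs{C}\ge\abs{G}/2$.

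Third, I would remove the primality restriction. For $D$ square-free the Chinese remainder isomorphism $Z_D\cong\bigoplus_i Z_{d_i}$ splits $V$ as an orthogonal direct sum over the prime moduli $d_i$, and both $\abs{C}$ and $\abs{G}$ add over the factors (the maximal isotropic of an orthogonal direct sum is the direct sum of the factor maximal isotropics), so the prime-case bound gives the general one; a prime power $D=p^e$ would instead be handled with $Z_D$-module invariant factors, with $m$ replaced by half the number of non-radical invariant factors, but this is not needed for the paper's square-free setting.

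The main obstacle I expect is not the linear algebra over a field but pinning down the dictionary of the first paragraph robustly: that a \emph{minimal} generating set of $G$ maps to a basis of $V$ (so $\abs{G}=\dim V$), that ``maximal Abelian subgroup'' corresponds to ``maximal isotropic submodule'' rather than merely a maximal isotropic containing some fixed element, and that passing to $V/R$ loses no isotropic direction. An alternative, purely group-theoretic route avoids $V/R$: induct on $\abs{G}$; if all generators commute then $C=G$ and $c(G)=\abs{G}\ge\abs{G}/2$, otherwise pick a non-commuting pair $g_1,g_2$, use invertibility of $\lambda_{12}$ (for prime $D$) to normalize $g_1g_2=\omega g_2g_1$, multiply every other generator by suitable powers of $g_1,g_2$ so it commutes with both, obtain a rank-$(\abs{G}-2)$ subgroup $G'$ all of whose generators commute with $g_1$, apply the inductive bound $\abs{C'}\ge(\abs{G}-2)/2$ to $G'$, and conclude $\langle g_1\rangle\cdot C'$ is Abelian of rank $\ge1+(\abs{G}-2)/2=\abs{G}/2$. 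In that version the delicate point is the bookkeeping that the modified generators still form a minimal generating set of a rank-$(\abs{G}-2)$ group and that $g_1\notin G'$ — again cleanest to verify in the vector picture.
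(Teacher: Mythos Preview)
Your proof is correct but proceeds along a different axis than the paper's. The paper argues directly with generators: it writes $G=\langle g_i;\bar g_j\rangle$ with the $g_i$ generating the maximal Abelian $C$ and the $\bar g_j$ a complementary set, shows (by successive replacements $g_2\to g_1^{m}g_2$, using invertibility of the commutator exponent for prime $D$) that each $\bar g_j$ can be made to non-commute with exactly one $g_i$, and then observes that if $\abs{\bar C}>\abs{C}$ the pigeonhole principle forces two $\bar g$'s to be tied to the same $g_i$, so a suitable product of them commutes with all of $C$, contradicting maximality. Your symplectic radical--quotient argument is cleaner and more structural---it even yields the exact value $\dim W=\dim V-m$, not just the bound---and is in fact the method the paper itself adopts later for the continuous-variable analogue (the lemma on $V_S$ and $\bar V_S$ in the CV supplement). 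What the paper's hands-on route buys is constructiveness: the generator manipulations directly exhibit the ``exclusive commutation'' pairing $\{g_k,\bar g_k\}$ that feeds into the next two lemmas and ultimately the canonical decomposition, whereas your route would need a separate appeal to symplectic Gram--Schmidt to recover that pairing explicitly. Your inductive alternative is closer in spirit to the paper's argument, but the paper does not induct; it runs the elimination once and finishes with pigeonhole.
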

\begin{proof}
Denote $\bar{C}$ the subgroup generated by $p=\abs{G}-c(G)$ generators $\bar{g}_j$, s.t. $G=\braket{g_i,
\bar{g}_j}$ ($p$ is called incompatibility index). Then we claim that up to multiplication of elements in generator $C$, each $\bar{g}_j$ can be made to commute with all but one $g_j$ in $C$, which can be seen in two steps: first, it is obvious $\bar{g}_j$ can't commute with all $g_j$ since $C$ itself is maximal Abelian subgroup; second, suppose $\bar{g}_j$ is not commuting with two generator elements $g_1,g_2$ in $C$, e.g., $\bar{g}_jg_1=g_1\bar{g}_j\omega^{\lambda_1},\bar{g}_jg_2=g_2\bar{g}_j\omega^{\lambda_2}.$ By the multiplication of $g_1,g_2$, we can replace $g_2$ by $g_1^mg_2$ with some integer $m$ \cite{Daniel1998}, s.t. $[\bar{g}_j,g_1^mg_2]=0$ \footnote{The existence of integers $m$ is guaranteed when the qudit dimension is prime.}, and $\bar{g}_j$ still not commuting with $g_1$. 

If $c(G)<\abs{G}/2$, by the pigeon hole principle, we could get two generators in $\bar{C}$ which are not commuting with a same element in $C$. With the same method, we could get one of them to be commuting with all elements in $C$. This should be avoided since $C$ being Abelian subgroup is already maximal.
\end{proof}

\begin{lemma}
In the mean time respecting lemma $1$, we can make the generators of $\bar{C}$ commute with each other.
\end{lemma}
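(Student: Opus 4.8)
The plan is to take as given the generating set produced in the proof of Lemma~\ref{lm1}: generators $g_1,\dots,g_c$ spanning the maximal Abelian subgroup $C$, and generators $\bar g_1,\dots,\bar g_p$ with $p=\abs{G}-c$, arranged so that $g_a\bar g_a=\omega^{\lambda_a}\bar g_a g_a$ with $\lambda_a\neq 0$ while $\bar g_j$ commutes with every $g_i$, $i\neq j$. In particular each $g_a$ anticommutes with $\bar g_a$ and with no other $\bar g$. I would then repair the commutators among the $\bar g$'s one pair at a time, each repair being an elementary redefinition $\bar g_b\mapsto \bar g_b g_a^{m}$ by a power of one of the $g$'s, which by construction cannot disturb the exclusive pairing with $C$.

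Concretely, process the indices in increasing order of $a$, and for each $a$ in increasing order of $b>a$. If $\bar g_a\bar g_b=\omega^{\mu}\bar g_b\bar g_a$ with $\mu\neq 0$, replace $\bar g_b$ by $\bar g_b g_a^{m}$. Using only $g_a\bar g_a=\omega^{\lambda_a}\bar g_a g_a$ one gets $\bar g_a(\bar g_b g_a^{m})=\omega^{\mu-m\lambda_a}(\bar g_b g_a^{m})\bar g_a$, so it suffices to pick $m$ with $m\lambda_a\equiv\mu \pmod D$; such an $m$ exists because $D$ is prime, so $\lambda_a$ is an invertible element of the field $Z_D$. The remaining invariants are immediate: $\bar g_b g_a^{m}$ still anticommutes with $g_b$ and with no other generator of $C$ (since $g_a^{m}\in C$ commutes with every $g_i$), the full set still generates $G$ (an elementary change of generators), and $(\bar g_b g_a^{m})^D=I$ because $\bar g_b$ and $g_a$ commute, so no $-I$ is introduced.

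The one point that needs genuine care is showing that fixing the pair $(a,b)$ never re-introduces non-commutativity into a pair treated earlier. The key observation is that the correction operator $g_a$ used throughout stage $a$ commutes with $\bar g_c$ for every $c\neq a$; hence replacing $\bar g_b$ by $\bar g_b g_a^{m}$ leaves the commutator of $\bar g_b$ with each already-treated $\bar g_c$ ($c<a$) unchanged, and affects no commutator not involving $\bar g_b$. By induction on $a$: once stage $a$ finishes, $\bar g_a$ commutes with all the other $\bar g$'s, and it stays that way through all later stages since the correction operators $g_{a'}$ with $a'>a$ also commute with $\bar g_a$. When the process terminates, $\langle \bar g_1,\dots,\bar g_p\rangle$ is Abelian and the pairing with $C$ from Lemma~\ref{lm1} is intact, which is exactly the assertion. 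I would finish by fixing the phase conventions in the displayed commutator once and for all; that is routine bookkeeping and does not affect the structure of the argument.
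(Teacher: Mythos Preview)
Your argument is correct and follows essentially the same route as the paper: fix the non-commuting pair $[\bar g_a,\bar g_b]$ by multiplying one of them by a suitable power of the partner $g$ of the other, then iterate. The paper's proof is terser (it gives one representative step and says ``recursively''), whereas you additionally verify that each correction preserves the exclusive pairing with $C$ and does not undo earlier fixes; that extra care is welcome but does not change the underlying idea.
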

\begin{proof}
Respecting lemma \ref{lm1}, the group $G$ can be written as $G=\braket{g_1,...,g_i,...;\bar{g}_1,...,\bar{g}_j,...}$,
where $g_i$ commutes with any generator in $G$ except $\bar{g}_i$ (notice there might be some $g_i$ commuting all generators, which generates the center of the group $G$). We want the same is true for any $\bar{g}_i$. This can be accomplished recursively: suppose we have $[\bar{g}_1,\bar{g}_2]\neq 0$, and we know $[g_2,\bar{g}_2]\neq 0$, then we can replace $\bar{g}_1$ by $\bar{g}^\prime_1=g_2^m\bar{g}_1$ with some integer $m$ such that $[\bar{g}^\prime_1,\bar{g}_2]=0$.
\end{proof}

Now let's return to the qudit stabilizer and focus on the correlated generator $S_{AB}$. Define projection maps: $P_A:g^A\otimes g^B\rightarrow g^A\otimes I^B$ (or $P_B:g^A\otimes g^B\rightarrow I^A\otimes g^B$), where the image of the map forms a subgroup of the generalized Pauli group.

\begin{lemma}
The centers of the group $P_A(S_{AB})$ and $P_B(S_{AB})$ are trivial.  
\end{lemma}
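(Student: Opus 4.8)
The plan is to show that if the center of $P_A(S_{AB})$ were nontrivial, then there would exist a correlated generator $g^A\otimes g^B\in S_{AB}$ whose $A$-part $g^A$ commutes with the $A$-part of every element of $S_{AB}$, and then leverage the ambient stabilizer structure to derive a contradiction with maximality/purity. First I would unwind the definitions: an element of the center of $P_A(S_{AB})$ is (the image of) some $h = h^A\otimes h^B\in S_{AB}$ with $h^A\neq I^A$ such that $[h^A, g^A]=0$ for all $g^A\otimes g^B\in S_{AB}$; here $h^A\neq I^A$ is exactly the statement that $h$ is not in the kernel of $P_A$, which for a correlated generator means $h$ is genuinely correlated (if $h^A=I^A$ then $h=I^A\otimes h^B$ would be a local generator of $S_B$, not of $S_{AB}$).

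Next I would combine this with commutation against the local and other sectors. Since $S=\braket{a_i\otimes I^B, I^A\otimes b_j, g_k^A\otimes g_k^B}$ is abelian, $h^A$ automatically commutes with every $a_i$ (because $h\otimes$-anything commutes with $a_i\otimes I^B$ forces $[h^A,a_i]=0$) and trivially with the $B$-only generators. So if in addition $h^A$ commutes with all $g_k^A$, then $h^A\otimes I^B$ commutes with every generator of $S$. By the maximality of the stabilizer group (equivalently, the purity of the state: $S$ is a maximal abelian subgroup of the generalized Pauli group on $n$ qudits once we have $n$ independent generators and $D$ prime), any operator commuting with all of $S$ must lie in $S$. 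Hence $h^A\otimes I^B\in S$, i.e. $h^A\otimes I^B\in S_A$. But then $h = h^A\otimes h^B$ and $h^A\otimes I^B$ both lie in $S$, so their ratio $I^A\otimes h^B$ lies in $S$, i.e. in $S_B$. This means $h$ is a product of an $S_A$-generator and an $S_B$-generator, contradicting the fact that the canonical decomposition splits $S$ as an \emph{independent} direct product $S_A\times S_B\times S_{AB}$ — equivalently, it contradicts $\abs{S_A}+\abs{S_B}+\abs{S_{AB}} = n$ with the three subgroups independent, since $h$ was assumed to be a nontrivial element of the $S_{AB}$ part. The same argument with $A$ and $B$ interchanged handles $P_B(S_{AB})$.

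The step I expect to be the main obstacle is making rigorous the claim ``$h^A$ commutes with everything in $S$ implies $h^A\otimes I^B\in S$.'' This uses that $S$ is a \emph{maximal} abelian subgroup, which holds for a pure stabilizer state in prime dimension (this is where $D$ prime is essential — it is exactly the point flagged in the footnote to Lemma~\ref{lm1}), and one must be slightly careful that it is $h^A\otimes I^B$ rather than $h^A\otimes h^B$ that needs to be shown central, so the commutation of $h^A$ with the $B$-trivial part is what does the work, and the $b_j$ generators commute with $h^A\otimes I^B$ for free. A cleaner way to package the whole argument, avoiding direct appeal to maximality, is to note that $S_{AB}$ itself, viewed through the exclusive-commutation pairing from Lemma~\ref{th3}, must have the property that the symplectic form restricted to $P_A(S_{AB})$ is nondegenerate — a degenerate direction would be precisely a central element, and one shows directly that such an element could be absorbed into $S_A$ by a generator relabeling, reducing $\abs{S_{AB}}$ and contradicting that the canonical form already achieves the minimal correlated rank $2E$. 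I would present the first (maximality-based) version as the main line and remark that this is equivalent to the nondegeneracy statement needed to set up the pairing in the next lemma.
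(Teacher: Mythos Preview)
Your argument is correct but follows a different line from the paper's. The paper proves the lemma by a dimension count built on the two preceding lemmas: write $z$ for the rank of the center of $P_A(S_{AB})$ and $p$ for its incompatibility index, so that $\abs{S_{AB}}=z+2p$. One then observes that the $\abs{S_A}$ local $A$-generators together with the $z+p$ generators of a maximal abelian subgroup of $P_A(S_{AB})$ yield $\abs{S_A}+z+p$ independent commuting Pauli operators supported on $A$, hence $\abs{S_A}+z+p\le n_A$; adding the analogous bound on $B$ and comparing with the generator count $\abs{S_A}+\abs{S_B}+z+2p=n$ forces $z=0$.

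Your approach instead invokes maximality of $S$ directly: a central $h^A\in P_A(S_{AB})$ makes $h^A\otimes I^B$ commute with every generator of $S$, so $h^A\otimes I^B\in S_A$ by maximality, whence $I^A\otimes h^B\in S_B$ and $h\in S_A\cdot S_B$, contradicting the internal direct-product splitting $S=S_A\times S_B\times S_{AB}$. This is cleaner and more conceptual, and it exhibits concretely how a nontrivial center would let one peel a correlated generator off into the local sectors. The paper's route, by contrast, stays entirely within the combinatorial framework of compatibility/incompatibility indices already established and does not appeal to the black-box fact that the centralizer of $S$ equals $S$ (though both proofs ultimately rest on the Lagrangian bound that a commuting Pauli set on $m$ qudits has rank at most $m$). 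Your worry about justifying the maximality step is unfounded in this setting: for prime $D$ and a pure state, $S$ corresponds to a Lagrangian in $(Z_D^{2n},\sigma)$, so its symplectic orthogonal---equivalently its centralizer modulo phases---is $S$ itself.
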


\begin{proof}
Denote $z$ as the rank of the center of the group $P_A(S_{AB})$ and $p$ as its incompatibility index. Now if we look at the stabilizer $S$, we could have $\abs{S_A}+z+p$ independent commuting generators for system $A$. Similarly for system $B$, we have $\abs{S_B}+z+p$ independent commuting generators. The summation of these two sets of generator can not be lager than $n$ (the rank of the stabilizer $S$)
\begin{equation}
    \abs{S_A}+\abs{S_B}+2z+2p\le n.
\end{equation}
However, if we count the number of generators in $S$, we get $ \abs{S_A}+\abs{S_B}+z+2p=n$, which suggests $z=0$. This lemma finally brings the generators of $S$ into its canonical form given in the theorem.
\end{proof}

\section{Entanglement trimming for continuous variable stabilizer state}

The Pauli group for $n$ qubits plays a major role in stabilizer error correction since it forms a basis for qubit operators. For continuous variables, such as $n$ bosonic modes, the Pauli group is called Heisenberg-Weyl (HW) group, composed of all phase space displacements \cite{Richard2004,Kok2010}. Denote $q_i$ and $p_i$ as the $i_\text{th}$ mode position and momentum quadratures, which satisfy the canonical commutation relation $[\hat{q}_i,\hat{p}_i]=i$. The HW group $\mathcal{H}_n$ is a Lie group and those quadratures are the infinitesimal generators, which forms the Lie algebra. In fact, tracking the evolution of the Lie algebra elements could lead us to a nice proof of continuous version of Gottesman-Knill theorem \cite{Stephen2002}. The HW group element can be written as
\begin{equation}
    g(\bm{v})=\exp\left({i\sqrt{2\pi}\sum_i^ns_i\hat{p}_i+t_i\hat{q}_i}\right),
\end{equation}
where we immediately see the one-to-one correspondence between the element $g(\bm{v})$ and a vector $\bm{v}=(s_1,t_1,s_2,t_2,...,s_n,t_n)\in \mathcal{R}^{2n}$. The group elements satisfy
\begin{equation}
    g(\bm{v})g(\bm{v}^\prime)=g(\bm{v}^\prime)g(\bm{v})e^{i2\pi\sigma(\bm{v},\bm{v}^\prime)},
\end{equation}
where $\sigma(\bm{v},\bm{v}^\prime)=\sum_i^n(s_it_i^\prime-s_i^\prime t_i)$ is the standard symplectic form on the symplectic vector space $(\mathcal{R}^{2n},\sigma)$. This correspondence enables us to define Abelian subgroups in terms of the symplectic vectors
\begin{equation}
\mathcal{S}=\{g(\textbf{u})\in\mathcal{H}_n|\textbf{u}=\sum_{i=1}^K\alpha_i\bm{v}_i,\alpha_i\in\mathcal{R}\},
\end{equation}
where $\sigma(\bm{v}_i,\bm{v}_j)=0$ for all $1\le i,j\le K\le n$ \footnote{Note this constraint is much stronger than the general requirement of an Abelian subgroup, where the skew-product only needs to be an interger \cite{Gkp2001,Jim2001}.}. Thus we also have a corresponding vector space for the Abelian subgroup $\mathcal{S}$, which is $\mathcal{V_S}=span\{\bm{v}_1,\bm{v}_2,...,\bm{v}_K \}$ (Note $\mathcal{V_S}$ is in general not a symplectic vector space). The Abelian subgroup could form a set of stabilizers and when $K=n$, the stabilizers determine a stabilizer state \cite{Jingtao2009}. Similar to qubit (qudit) case with bipartition $A|B$, the stabilizer generator contains local ($\mathcal{S}_A,\mathcal{S}_B$) and correlated ($\mathcal{S}_{AB}$) subsets. The canonical form of the generators can be expressed in terms of the corresponding vectors.

\subsection{Canonical decomposition of generators for continuous variable stabilizer state}

\begin{theorem}---\textbf{Canonical form of stabilizer generators for continuous variables}:
Given an $n$-mode stabilizer state with bi-partition $A$ and $B$, the vector correspondence $\mathcal{V_S}$ to its stabilizer group $\mathcal{S}$ can always be brought into the following canonical form 
\begin{equation}
    \mathcal{V_S}=span\left\{\bm{v}_i^A\cup\textbf{0}^B,\textbf{0}^A\cup\bm{v}_j^B,\textbf{u}_k^A\cup\textbf{u}_k^B, \bar{\textbf{u}}_k^A\cup\bar{\textbf{u}}_k^B \right\},
\end{equation}
where the first two subsets correspond to the local generators for $\mathcal{S}_A$ and $
\mathcal{S}_B$, while the last two paired subsets correspond to correlated generators for $\mathcal{S}_{AB}$. Those paired vectors satisfy $\sigma(\textbf{u}_{k_i}^A,\bar{\textbf{u}}_{k_j}^A)=-\sigma(\textbf{u}_{k_i}^B,\bar{\textbf{u}}_{k_j}^B)\neq 0$ when $k_i=k_j$, and satisfy $\sigma(\textbf{u}_{k_i}^A,\bar{\textbf{u}}_{k_j}^A)=\sigma(\textbf{u}_{k_i}^B,\bar{\textbf{u}}_{k_j}^B)= 0$ when $k_i\neq k_j$.
\end{theorem}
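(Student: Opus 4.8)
The plan is to run the argument entirely in the symplectic vector space $(\mathcal{R}^{2n},\sigma)$, mirroring the three-lemma proof used above for the qudit case (Lemma~\ref{lm1} and the subsequent trivial-center lemma), with the dimension bound for isotropic subspaces of a symplectic space taking over the role of the combinatorial pigeon-hole step. Write $\sigma=\sigma_A\oplus\sigma_B$ for the splitting of the symplectic form induced by the bi-partition, so that $\sigma(\bm{w}^A\cup\bm{w}^B,\bm{w}'^A\cup\bm{w}'^B)=\sigma_A(\bm{w}^A,\bm{w}'^A)+\sigma_B(\bm{w}^B,\bm{w}'^B)$. Because $\mathcal{S}$ is a stabilizer group with $K=n$ generators, $\mathcal{V_S}$ is an $n$-dimensional subspace of $\mathcal{R}^{2n}$ on which $\sigma$ vanishes identically, i.e.\ a Lagrangian plane. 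First I would isolate the local parts by setting $\mathcal{V}_A=\mathcal{V_S}\cap(\mathcal{R}^{2n_A}\cup\textbf{0}^B)$ and $\mathcal{V}_B=\mathcal{V_S}\cap(\textbf{0}^A\cup\mathcal{R}^{2n_B})$; bases of these give the first two subsets of the claimed decomposition. Then I would fix any complement $\mathcal{W}$ with $\mathcal{V_S}=\mathcal{V}_A\oplus\mathcal{V}_B\oplus\mathcal{W}$, so that $\dim\mathcal{W}=n-\dim\mathcal{V}_A-\dim\mathcal{V}_B$; the goal is to exhibit a basis of $\mathcal{W}$ of the paired form $\{\textbf{u}_k^A\cup\textbf{u}_k^B,\bar{\textbf{u}}_k^A\cup\bar{\textbf{u}}_k^B\}$.

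The key step is to show that the restriction of $\sigma_A$ to $W_A:=P_A(\mathcal{W})$ is nondegenerate, and likewise for $\sigma_B$ on $W_B:=P_B(\mathcal{W})$; this is the continuous analog of the trivial-center lemma. Note first that $P_A$ is injective on $\mathcal{W}$: any element of $\mathcal{W}$ with vanishing $A$-component lies in $\mathcal{V_S}\cap(\textbf{0}^A\cup\mathcal{R}^{2n_B})=\mathcal{V}_B$, hence is zero, so $\dim W_A=\dim\mathcal{W}$. Let $z=\dim(\mathrm{rad}\,\sigma_A|_{W_A})$ and write $\dim W_A=z+2p$ via a Darboux-adapted basis of $W_A$ (a $z$-dimensional radical plus $p$ symplectic pairs). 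I would then assemble inside $(\mathcal{R}^{2n_A},\sigma_A)$ the subspace spanned by the $A$-components of $\mathcal{V}_A$, of the lifts to $\mathcal{W}$ of a basis of $\mathrm{rad}\,\sigma_A|_{W_A}$, and of one vector from each of the $p$ symplectic pairs. One checks these are mutually $\sigma_A$-orthogonal and linearly independent, using that the corresponding full vectors all sit in the Lagrangian $\mathcal{V_S}$ and that $\mathcal{V}_A$ has vanishing $B$-component; hence the subspace is $\sigma_A$-isotropic of dimension $\dim\mathcal{V}_A+z+p$. Therefore $\dim\mathcal{V}_A+z+p\le n_A$, and symmetrically $\dim\mathcal{V}_B+z+p\le n_B$. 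Adding these two inequalities and comparing with the exact count $\dim\mathcal{V}_A+\dim\mathcal{V}_B+(z+2p)=\dim\mathcal{V_S}=n$ forces $z=0$.

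With nondegeneracy in hand, symplectic Gram--Schmidt (the Darboux theorem) supplies a basis $\{\textbf{u}_1^A,\bar{\textbf{u}}_1^A,\dots,\textbf{u}_p^A,\bar{\textbf{u}}_p^A\}$ of $W_A$ with $\sigma_A(\textbf{u}_{k_i}^A,\bar{\textbf{u}}_{k_j}^A)\neq 0$ exactly when $k_i=k_j$, and $\sigma_A$ vanishing between any two of the $\textbf{u}$'s and between any two of the $\bar{\textbf{u}}$'s. Since $P_A$ is injective on $\mathcal{W}$, each of these vectors lifts to a unique element of $\mathcal{W}$, say $\textbf{u}_k^A\cup\textbf{u}_k^B$ and $\bar{\textbf{u}}_k^A\cup\bar{\textbf{u}}_k^B$. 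I would finish by using once more that all these full vectors lie in the Lagrangian $\mathcal{V_S}$: for every pair of indices, $0=\sigma(\textbf{u}_{k_i}^A\cup\textbf{u}_{k_i}^B,\bar{\textbf{u}}_{k_j}^A\cup\bar{\textbf{u}}_{k_j}^B)=\sigma_A(\textbf{u}_{k_i}^A,\bar{\textbf{u}}_{k_j}^A)+\sigma_B(\textbf{u}_{k_i}^B,\bar{\textbf{u}}_{k_j}^B)$, so $\sigma_B(\textbf{u}_{k_i}^B,\bar{\textbf{u}}_{k_j}^B)=-\sigma_A(\textbf{u}_{k_i}^A,\bar{\textbf{u}}_{k_j}^A)$, which is nonzero precisely when $k_i=k_j$; the same computation shows $\sigma_B$ vanishes between any two $\textbf{u}^B$'s and between any two $\bar{\textbf{u}}^B$'s. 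These are exactly the relations in the statement, and together with bases of $\mathcal{V}_A$ and $\mathcal{V}_B$ they realize the canonical form.

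I expect the main obstacle to be the nondegeneracy step: one must verify carefully that the three families whose $A$-components are combined are genuinely linearly independent and mutually $\sigma_A$-orthogonal, and that the global count attributes $z+2p$ generators (not $2p$) to the correlated block before one learns that $z=0$ --- this is precisely the point where an error could hide, just as the trivial-center lemma is the crux of the qudit proof. A secondary subtlety worth recording is that the complement $\mathcal{W}$ is not canonical; changing it shifts each $\textbf{u}_k^B$ by an element of $\mathcal{V}_B$, but the relations above hold for any such choice, so this ambiguity is harmless.
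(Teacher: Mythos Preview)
Your proposal is correct and follows essentially the same route as the paper: both isolate the local subspaces $\mathcal{V}_A,\mathcal{V}_B$ and a correlated complement, then prove nondegeneracy of $\sigma_A$ on the $A$-projection of the correlated part via the same counting argument (isotropic-dimension bound on each side, compared against the exact generator count to force the radical dimension $z$ to vanish), and finally invoke a Darboux/Gram--Schmidt basis together with the Lagrangian relation $\sigma_A+\sigma_B=0$ to obtain the paired structure. The paper packages the combinatorics into its Lemma~\ref{cvle1} before specializing, whereas you use standard symplectic terminology (radical, Darboux-adapted basis) directly; the substance of the nondegeneracy step is identical.
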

 The above theorem can be proved by exploring the structure of the symplectic vector space, and we present it in terms of the following lemmas.

\begin{lemma}\label{cvle1}
Denote $V_n$ as a vector space (not necessarily a symplectic space) and $V_S= span\{\bm{v}_i\}$ its maximal subspace which satisfies ${V_S}={V_S}^\sigma$, and take ${\bar{V}_S}= span\{\bar{\bm{v}}_j\}$ as another subspace of $V_n$ that satisfies ${V}_n=span\{{V_S};{\bar{V}_S}\}$, where $\sigma$ is the standard symplectic form defined on any enlarged symplectic vector space, $\{\bm{v}_i\}$ and $\{\bar{\bm{v}}_j\}$ are the linear independent sets correspondingly. Then we will have $dim({V_S})\ge dim({\bar{V}_S})$; the vectors satisfy $\sigma(\bm{v}_i,\bar{\bm{v}}_j)=\alpha\delta_{ij} \forall i,j$, $\alpha\in\mathcal{R}$ and $\sigma(\bar{\bm{v}}_i,\bar{\bm{v}}_j)=0, \forall i,j$.
\end{lemma}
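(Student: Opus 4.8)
## Proof proposal for Lemma \ref{cvle1}

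The plan is to mimic, in the language of symplectic vector spaces, the group-theoretic argument used for qubits and qudits in Lemmas \ref{lm1} and the one immediately following it, but now with the extra feature that the pairing constant $\alpha$ is a real number rather than a root of unity. First I would set up notation: write $V_S = \mathrm{span}\{\bm v_i\}$ for the maximal $\sigma$-self-orthogonal subspace (so $V_S \subseteq V_S^\sigma$ and maximality forces $V_S = V_S^\sigma$), and $\bar V_S = \mathrm{span}\{\bar{\bm v}_j\}$ a complement so that $V_n = V_S \oplus \bar V_S$ as vector spaces. The claim has three parts: (a) $\dim V_S \ge \dim \bar V_S$; (b) after a suitable change of basis of $\bar V_S$ (by adding vectors of $V_S$), one has $\sigma(\bm v_i, \bar{\bm v}_j) = \alpha\,\delta_{ij}$; (c) after a further change of basis of $\bar V_S$ (again by adding vectors of $V_S$), the $\bar{\bm v}_j$ become mutually $\sigma$-orthogonal. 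Note that modifying $\bar{\bm v}_j$ by elements of $V_S$ does not change the span $V_n$ nor the self-orthogonality of $V_S$, so all such moves are legitimate.

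For part (b), I would argue as in Lemma \ref{lm1}: fix $\bar{\bm v}_j$. It cannot be $\sigma$-orthogonal to all of $V_S$, for otherwise $V_S \oplus \mathbb R \bar{\bm v}_j$ would be a strictly larger self-orthogonal subspace, contradicting maximality of $V_S$. Suppose $\bar{\bm v}_j$ fails to be orthogonal to two independent generators, say $\sigma(\bm v_1,\bar{\bm v}_j)=\lambda_1\neq 0$ and $\sigma(\bm v_2,\bar{\bm v}_j)=\lambda_2\neq 0$; then replacing $\bm v_2$ by $\bm v_2 - (\lambda_2/\lambda_1)\bm v_1$ (legitimate since we may freely re-choose a basis of $V_S$, which stays self-orthogonal) kills the pairing with the new second generator while preserving it with $\bm v_1$. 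Iterating, each $\bar{\bm v}_j$ can be arranged to pair nontrivially with exactly one generator of $V_S$, and by a pigeonhole argument — if $\dim V_S < \dim \bar V_S$ two of the $\bar{\bm v}_j$ would pair with the same $\bm v_i$, and their difference (suitably scaled) would then be orthogonal to all of $V_S$, again contradicting maximality — this gives both $\dim V_S \ge \dim \bar V_S$ and, after relabeling, $\sigma(\bm v_i, \bar{\bm v}_j) \propto \delta_{ij}$. Normalizing the $\bar{\bm v}_j$ (or absorbing scalars) yields the common constant $\alpha$; here I should be slightly careful that over $\mathbb R$ one genuinely can rescale to a single $\alpha$, unlike the modular case.

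For part (c), I would proceed recursively exactly as in the qudit proof of the companion lemma: if $\sigma(\bar{\bm v}_1, \bar{\bm v}_2) = \mu \neq 0$, use that $\sigma(\bm v_2, \bar{\bm v}_2) = \alpha \neq 0$ to replace $\bar{\bm v}_1$ by $\bar{\bm v}_1' = \bar{\bm v}_1 - (\mu/\alpha)\,\bm v_2$, which satisfies $\sigma(\bar{\bm v}_1', \bar{\bm v}_2) = 0$ and, since $\bm v_2 \in V_S$ is orthogonal to every $\bm v_i$ with $i\neq 2$ is not quite what we need — rather, $\sigma(\bm v_2, \bar{\bm v}_i)=0$ for $i \neq 2$ by part (b), so the adjustment does not disturb the already-achieved pairing pattern $\sigma(\bm v_i,\bar{\bm v}_j)=\alpha\delta_{ij}$. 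Sweeping through all pairs $(\bar{\bm v}_i,\bar{\bm v}_j)$ in order makes $\bar V_S$ self-orthogonal without spoiling the earlier normalizations. The main obstacle I anticipate is the bookkeeping in part (c): I must verify that each correction step preserves \emph{all} previously established relations (both the $\delta_{ij}$ pairing and the orthogonalities already fixed), which is why the order of elimination matters and why it is essential that the correcting vector $\bm v_2$ lies in $V_S$ and hence is orthogonal to every $\bar{\bm v}_i$ with $i \neq 2$. Once this is checked, the three parts together give the asserted canonical structure, which then feeds directly into the proof of the continuous-variable canonical decomposition theorem.
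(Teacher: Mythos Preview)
Your proposal is correct and follows essentially the same route as the paper's proof: the paper also (i) uses maximality of $V_S$ to show each $\bar{\bm v}_j$ must pair nontrivially with some $\bm v_i$, then replaces generators of $V_S$ by linear combinations to reduce each $\bar{\bm v}_j$ to pairing with exactly one $\bm v_i$; (ii) invokes pigeonhole to get $\dim V_S\ge \dim \bar V_S$; and (iii) recursively corrects $\bar{\bm v}_{j_2}$ by a multiple of $\bm v_{j_1}$ to enforce $\sigma(\bar{\bm v}_{j_1},\bar{\bm v}_{j_2})=0$. Your extra remark about rescaling to a common $\alpha$ over $\mathbb R$ is a point the paper leaves implicit, and your bookkeeping worry in part (c) is well taken but resolved exactly as you indicate, since the correcting vector lies in $V_S$ and hence has vanishing $\sigma$-pairing both with all $\bm v_i$ and (by part (b)) with all $\bar{\bm v}_k$ for $k$ distinct from its index.
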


\begin{proof}
 We first prove that each vector $\bar{\bm{v}}_j$ in the set $\{\bar{\bm{v}}_j\}$ can be made to symplectic orthogonal to all but one of $\bm{v}_i$ in ${V_S}$. The reason is: first, $\forall \bar{\bm{v}}_j$ can't be symplectic orthogonal to all elements in ${V_S}$ since it is already maximal, meaning no more linear independent vectors can be added; second, if $\bar{\bm{v}}_j$ is not symplectic orthogonal to two elements $\bm{v}_i,\bm{v}_k\in {V_S}$, e.g., $\sigma(\bm{v}_i,\bar{\bm{v}}_j)=\alpha, \sigma(\bm{v}_k,\bar{\bm{v}}_j)=b$, then we can define a new vector $\bm{v}_k^\prime=\frac{\alpha}{b}\bm{v}_k-\bm{v}_i$ to replace $\bm{v}_k$ in the set, such that $\sigma(\bm{v}_k^\prime,\bar{\bm{v}}_j)=0$. Repeat this operation until we have $\sigma(\bm{v}_i,\bar{\bm{v}}_j)=\alpha\delta_{ij}, \forall i,j$.
 
We prove $dim (V_S)\ge dim(\bar{V}_S)$ by contradiction. If we have $dim ({V_S}) < dim({\bar{V}_S})$, then by the pigeon hole principle we can find $i_1\neq i_2$ such that $\bar{\bm{v}}_{i_1}$ and $\bar{\bm{v}}_{i_2}$ are not symplectic orthogonal to a same $\bm{v}_j$. Similarly we can replace $\bar{\bm{v}}_{i_1}$ or $\bar{\bm{v}}_{i_2}$ by a new vector that is symplectic orthogonal to $\bm{v}_j$, making ${V_S}$ not the maximal mutually symplectic orthogonal subspace. 

Thus, the vector space $V_n$ can be expressed as (up to relabel) ${V}_n= span\{\bm{v}_1,...,\bm{v}_{dim({V_S})};\bar{\bm{v}}_1,...,\bar{\bm{v}}_{dim({\bar{V}_S})}\}$,
where ($\bm{v}_j$,$\bar{\bm{v}}_j$) form a pair with $\bm{v}_j$ being symplectic orthogonal to all vectors but $\bar{\bm{v}}_j$. Now we prove the same is true for $\bar{\bm{v}}_j$, where we need to show $\sigma(\bar{\bm{v}}_i,\bar{\bm{v}}_j)=0, \forall i,j$. This can be done recursively: suppose there are two vectors $\bar{\bm{v}}_{j_1}$ and $\bar{\bm{v}}_{j_2}$ which are not symplectic orthogonal, and we also know $\bar{\bm{v}}_{j_1}$ is not symplectic orthogonal to $\bm{v}_{j_1}$, thus we can replace $\bar{\bm{v}}_{j_2}$ by a similar new vector that is symplectic orthogonal to $\bar{\bm{v}}_{j_1}$.
\end{proof}

\begin{lemma}
Given a symplectic vector space $(\mathcal{R}^{2n},\sigma)$ corresponding to a HW group $\mathcal{H}_n$. A pair of transverse Lagrangian planes $(L,\bar{L})$, e.g., $L=\text{span}\{\textbf{u}_i|1\le i\le n\}$ and $\bar{L}$=span$\{\bar{\textbf{u}}_j|1\le j\le n\}$ define two maximal Abelian subgroups: $\mathcal{H}_L$ with generators $g(\textbf{u}_i)$ and $\mathcal{H}_{\bar{L}}$ with generators $g(\bar{\textbf{u}}_j)$. The corresponding vectors satisfy $\sigma(\textbf{u}_i,\bar{\textbf{u}}_j)=\alpha\delta_{ij},\alpha\in\mathcal{R}, \forall i,j$.
\end{lemma}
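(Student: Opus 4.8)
The plan is to read this lemma off Lemma~\ref{cvle1}, the purely symplectic-vector-space statement established just above, specialized to the maximal (Lagrangian) case. The lemma already records that $L$ and $\bar L$ give rise to the maximal Abelian subgroups $\mathcal{H}_L=\braket{g(\textbf{u}_i)}$ and $\mathcal{H}_{\bar L}=\braket{g(\bar{\textbf{u}}_j)}$; this correspondence rests on the standard fact that a Lagrangian plane $W$ is precisely a maximal subspace on which $\sigma$ vanishes, equivalently $W=W^{\sigma}$, so that the displacements indexed by $W$ mutually commute (with vanishing skew-product, the stronger notion adopted in this section) and cannot be extended to a larger such family. The only thing left to prove is the cross-pairing relation $\sigma(\textbf{u}_i,\bar{\textbf{u}}_j)=\alpha\,\delta_{ij}$, and I would obtain it by feeding $L$ and $\bar L$ into Lemma~\ref{cvle1}.

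Concretely: transversality $L\cap\bar L=\{\textbf{0}\}$ together with $\dim L=\dim\bar L=n$ inside $\mathcal{R}^{2n}$ yields $\mathcal{R}^{2n}=L\oplus\bar L=\mathrm{span}\{L;\bar L\}$, so Lemma~\ref{cvle1} applies with $V_n=\mathcal{R}^{2n}$, $V_S=L$ (a maximal subspace obeying $V_S=V_S^{\sigma}$, since maximal isotropic $\Leftrightarrow$ Lagrangian in a $2n$-dimensional symplectic space) and $\bar V_S=\bar L$. It then produces bases $\{\textbf{u}_i\}$ of $L$ and $\{\bar{\textbf{u}}_j\}$ of $\bar L$ with $\sigma(\textbf{u}_i,\bar{\textbf{u}}_j)=\alpha\,\delta_{ij}$ for all $i,j$; the remaining relations $\sigma(\textbf{u}_i,\textbf{u}_j)=\sigma(\bar{\textbf{u}}_i,\bar{\textbf{u}}_j)=0$ are immediate from isotropy of $L$ and $\bar L$. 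Finally one checks $\alpha\neq 0$, since $\alpha=0$ would give $\sigma(L,\bar L)=0$, hence $\bar L\subseteq L^{\sigma}=L$, contradicting transversality; and since rescaling an individual $\bar{\textbf{u}}_j$ touches only the $j$-th pairing and preserves all isotropy, one may even normalize $\alpha=1$. This is exactly the claimed canonical pairing.

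The whole argument is essentially bookkeeping once Lemma~\ref{cvle1} is in hand, so I do not anticipate a substantive obstacle. The point requiring a little care is the translation that licenses the appeal to Lemma~\ref{cvle1}: that ``$V_S$ maximal with $V_S=V_S^{\sigma}$'' is the same condition as ``$L$ Lagrangian'', and that the transversality of $L$ and $\bar L$ is exactly what upgrades the inequality $\dim V_S\ge\dim\bar V_S$ supplied by Lemma~\ref{cvle1} to the equality $\dim L=\dim\bar L=n$ needed here. Everything of genuine content --- that two transverse Lagrangian planes admit $\sigma$-dual bases --- is already carried by Lemma~\ref{cvle1}.
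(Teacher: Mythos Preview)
Your proposal is correct and follows essentially the same route as the paper: both specialize Lemma~\ref{cvle1} to $V_n=(\mathcal{R}^{2n},\sigma)$, identify $V_S$ and $\bar V_S$ with the transverse Lagrangian planes $L$ and $\bar L$, and read off the pairing relation. Your writeup is in fact more careful than the paper's short proof --- you explicitly verify that Lagrangian $\Leftrightarrow$ maximal with $V_S=V_S^{\sigma}$, that transversality forces $\dim L=\dim\bar L=n$, and that $\alpha\neq 0$ --- but the underlying argument is the same.
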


\begin{proof}
Inspecting the previous lemma, if we require $V_n=(\mathcal{R}^{2n},\sigma)$ as a symplectic vector space, we could get $dim(V_S)=dim(\bar{V}_S)=n/2$. Since $V_S\cap\bar{V}_S=\emptyset$, they define a pair of transverse Lagrangian plane, denoted as $L= span\{\bm{u}_i|1\le i\le n\}$ and $\bar{L}= span\{\bar{\bm{u}}_j|1\le j\le n\}$. The commutation relation of those vectors just follow.
\end{proof}

 Notice the set $\{\bm{u}_i;\bar{\bm{u}}_j\}$ doesn't form a symplectic basis for $(\mathcal{R}^{2n},\sigma)$ since we just have $\sigma(\bm{u}_i,\bar{\bm{u}}_j)=\alpha\delta_{ij}\neq\delta_{ij}$. However, we can easily promote it to be a symplectic basis by normalizing those vectors to get $\alpha=1$, which completes the operation indicated in the renowned symplectic Gram-Schimdt theorem \cite{Gosson2006}. As an interesting observation, the Lagrangian plane of a standard symplectic vector space actually defines a continuous variable stabilizer state, suggesting more interesting structure of stabilizer state can be revealed in the framework of symplectic geometry. 
 
 To present the next lemma that finally proves the theorem, let's first denote $\mathcal{V}_A=span\{\bm{v}^A_i\cup\bm{0}^B\}$, $\mathcal{V}_B=span\{\bm{0}^A\cup\bm{v}_j^B\}$ and $\mathcal{V}_{AB}=span\{(\bm{v}^{A}\cup\bm{v}^B)_k\}$ as the vector space corresponding to the generator sets $\mathcal{S}_A$, $\mathcal{S}_B$ and $\mathcal{S}_{AB}$, respectively, and define a projection map $f_A(\mathcal{V}_{AB}):(\bm{v}^A\cup\bm{v}^B)_k\rightarrow\bm{v}_k^A\cup\bm{0}^B$ (similarly we have another map $f_B(\mathcal{V}_{AB}):(\bm{v}^A\cup\bm{v}^B)_k\rightarrow\bm{0}^A\cup\bm{v}_k^B$). Notice the image of the map $f_A(\mathcal{V}_{AB})$ is linearly independent with the vectors defined in $\mathcal{V}_A$. 

\begin{lemma}
The image of the map $f_A(\mathcal{V}_{AB})$ $(\text{or }f_B(\mathcal{V}_{AB}))$ is a symplectic vector space.
\end{lemma}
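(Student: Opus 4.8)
The plan is to mirror the qudit argument for the corresponding lemma (trivial centers of the projected correlated group), recast in symplectic-geometry language: here ``symplectic vector space'' means that the restriction of the ambient form $\sigma$ is non-degenerate, i.e. the subspace has trivial radical. Write $W_A:=f_A(\mathcal{V}_{AB})$, regarded inside $\mathcal{R}^{2n_A}$ with its standard symplectic form $\sigma_A$, and let $R_A:=W_A\cap W_A^{\sigma}$ be its radical; the goal is to show $R_A=\{0\}$ (and then the same with $A\leftrightarrow B$).

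First I would collect the structural facts available at this stage. Since $\mathcal{S}$ stabilizes an $n$-mode pure stabilizer state, $\mathcal{V_S}$ is an $n$-dimensional subspace of $(\mathcal{R}^{2n},\sigma)$ on which $\sigma$ vanishes identically (the continuous-variable stabilizer condition $\sigma(\bm{v}_i,\bm{v}_j)=0$); hence $\mathcal{V_S}$ is Lagrangian and $\mathcal{V_S}^{\sigma}=\mathcal{V_S}$. Moreover the generators split as a genuine direct sum $\mathcal{V_S}=\mathcal{V}_A\oplus\mathcal{V}_B\oplus\mathcal{V}_{AB}$, where $\mathcal{V}_A$ ($\mathcal{V}_B$) is exactly the subspace of vectors of $\mathcal{V_S}$ supported on $A$ ($B$) alone, every nonzero element of $\mathcal{V}_{AB}$ is supported nontrivially on both sides, and (as noted in the excerpt) $f_A$ is injective on $\mathcal{V}_{AB}$ with image linearly independent from $\mathcal{V}_A$.

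The core step: take $\bm{w}^A\in R_A$ and lift it, writing $\bm{w}^A=f_A(\bm{w})$ with $\bm{w}=\bm{w}^A\cup\bm{w}^B\in\mathcal{V}_{AB}$, and consider the embedded vector $\bm{w}^A\cup\bm{0}^B\in\mathcal{R}^{2n}$. I would check it is $\sigma$-orthogonal to each of the three sectors of $\mathcal{V_S}$: against $\mathcal{V}_B$ this is automatic (disjoint supports); against $\mathcal{V}_A$ it holds because $\bm{w}^A\cup\bm{w}^B$ and any $\bm{v}^A\cup\bm{0}^B$ both lie in the isotropic space $\mathcal{V_S}$, so $\sigma(\bm{w}^A\cup\bm{0}^B,\bm{v}^A\cup\bm{0}^B)=\sigma_A(\bm{w}^A,\bm{v}^A)=0$; against $\mathcal{V}_{AB}$ it reduces to $\sigma_A(\bm{w}^A,\bm{u}^A)$ for $\bm{u}^A=f_A(\bm{u})\in W_A$, which vanishes exactly because $\bm{w}^A\in R_A$. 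Hence $\bm{w}^A\cup\bm{0}^B\in\mathcal{V_S}^{\sigma}=\mathcal{V_S}$, so it is a stabilizer vector supported on $A$ alone, i.e. $\bm{w}^A\cup\bm{0}^B\in\mathcal{V}_A$. Consequently $\bm{0}^A\cup\bm{w}^B=\bm{w}-(\bm{w}^A\cup\bm{0}^B)\in\mathcal{V_S}$ is supported on $B$ alone, hence lies in $\mathcal{V}_B$; thus $\bm{w}\in(\mathcal{V}_A\oplus\mathcal{V}_B)\cap\mathcal{V}_{AB}=\{0\}$, forcing $\bm{w}^A=0$ and $R_A=\{0\}$. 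The $f_B$ statement follows by the symmetric argument.

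The only place that needs genuine care is the middle step's bookkeeping: verifying $\sigma$-orthogonality of the lift $\bm{w}^A\cup\bm{0}^B$ against all three sectors separately, and making sure the two ingredients it leans on — that $\mathcal{V_S}$ is already known to be Lagrangian (so $\mathcal{V_S}^{\sigma}=\mathcal{V_S}$) and that the decomposition $\mathcal{V_S}=\mathcal{V}_A\oplus\mathcal{V}_B\oplus\mathcal{V}_{AB}$ is legitimately in hand — are both available at this point of the construction rather than being what we are trying to prove. As a cross-check I would also record the fully parallel dimension count used in the qudit case: set $z=\dim R_A$ and $2p=\dim W_A-z$; then $\mathcal{V_S}$ yields $\dim\mathcal{V}_A+z+p$ mutually $\sigma$-orthogonal vectors supported on $A$ and $\dim\mathcal{V}_B+z+p$ supported on $B$, which taken together are linearly independent and mutually $\sigma$-orthogonal in $(\mathcal{R}^{2n},\sigma)$, hence number at most $n$; since counting generators gives $\dim\mathcal{V}_A+\dim\mathcal{V}_B+(z+2p)=n$ (using $\dim\mathcal{V}_{AB}=\dim W_A$), subtracting forces $z\le 0$, so $z=0$.
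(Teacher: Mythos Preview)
Your proposal is correct, and in fact you give two valid arguments. The cross-check dimension count at the end is precisely the paper's own proof: the paper sets $z$ (your $z$) to be the dimension of the radical, writes $\dim W_A = z + 2\bar p$ (your $z+2p$), observes that $\mathcal{V_S}$ furnishes $\dim\mathcal{V}_A + z + \bar p$ independent isotropic vectors on $A$ and similarly on $B$, bounds their sum by $n$, and compares with $\dim\mathcal{V}_A + \dim\mathcal{V}_B + z + 2\bar p = n$ to force $z=0$.

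Your primary argument, however, is a genuinely different and more conceptual route. Rather than counting, you exploit directly that $\mathcal{V_S}$ is Lagrangian (so $\mathcal{V_S}^\sigma=\mathcal{V_S}$): any radical element $\bm{w}^A$ lifts to a vector whose $A$-part $\bm{w}^A\cup\bm{0}^B$ is shown to lie in $\mathcal{V_S}^\sigma=\mathcal{V_S}$, hence in $\mathcal{V}_A$, which immediately forces $\bm{w}\in(\mathcal{V}_A\oplus\mathcal{V}_B)\cap\mathcal{V}_{AB}=\{0\}$. This is cleaner and makes transparent \emph{why} the radical vanishes (its elements would secretly be local generators), whereas the paper's counting argument mirrors the qudit Lemma mechanically. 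The trade-off is that your approach leans on the Lagrangian identity $\mathcal{V_S}^\sigma=\mathcal{V_S}$, which holds here because the state is pure ($K=n$); the dimension count is slightly more robust to relaxations of that hypothesis.
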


\begin{proof}
Obviously, $f_A(\mathcal{V}_{AB})= span\{\bm{v}^A_k\}$ is a vector space and it is bilinear and alternating in terms of the standard symplectic form $\sigma$. We only have to show the vector space $f_A(\mathcal{V}_{AB})$ is nondegenerate. Suppose $f_A(\mathcal{V}_{AB})=span\left\{ \{\bm{v}_i\}_{i=1,...,p};\{\bar{\bm{v}}_j\}_{j=1,...,\bar{p}} \right\}$ with $p-\bar{p}=t\ge 0$, meaning there are $t$ non-zero vectors that are symplectic orthogonal to all vectors in the image. Now if we look at the vector space $\mathcal{V_S}$, we can find $dim(\mathcal{V}_A)+p=dim(\mathcal{V}_A)+t+\bar{p}$ independent vectors on system $A$. Similarly considering $f_B(\mathcal{V}_{AB})$, we could have $dim(\mathcal{V}_B)+t+\bar{p}$ independent vectors on system $B$. The summation of these independent vectors can not be larger than $n$, which is the dimension of $\mathcal{V_S}$
\begin{equation}
    dim(\mathcal{V}_A)+dim(\mathcal{V}_B)+2t+2\bar{p}\le n.
\end{equation}
However, we know the independent vectors on $\mathcal{V_S}$ is simply counted as $dim(\mathcal{V}_A)+dim(\mathcal{V}_B)+t+2\bar{p}=n$. The only possibility is $t=0$, which means all vectors in $f_A(\mathcal{V}_{AB})$ are paired, thus only zero vectors can be symplectic orthogonal to all vectors in it, leading to nondegeneracy. The canonical form given in the theorem just follows.
\end{proof}

As a remark, all symplectic analysis for continuous variable stabilizer state apply naturally to qudit with prime dimension $D$, where the symplectic vector space is defined in the field $Z_D$. With the canonical decomposition of stabilizer generators, we can analogously construct ``the bigger man principle" for continuous variable stabilizer state.
\begin{theorem}
---\textbf{the bigger man principle for continuous variable}: for a continuous variable stabilizer state with tri-partition $AB|C$, according to the canonical decomposition theorem, the vector space corresponding to its stabilizer can be written in the canonical form,
\begin{equation}
    \mathcal{V_S}=span\left\{\bm{v}_i^{AB}\cup\textbf{0}^C,\textbf{0}^{AB}\cup\bm{v}_j^C,\textbf{u}_k^{AB}\cup\textbf{u}_k^C, \bar{\textbf{u}}_k^{AB}\cup\bar{\textbf{u}}_k^C \right\}.
\end{equation}
The $A|BC$ entanglement can be transformed to $A|B$ with local operation on $C$, if and only if, (1) $\forall$ $k$, either $\textbf{u}_k^{AB}=\textbf{0}^A\cup\textbf{u}^B_k$ or $\bar{\textbf{u}}_k^{AB}=\textbf{0}^A\cup\bar{\textbf{u}}^B_k$, or (2) even pairs of correlated vectors come in the form $\{ \textbf{0}^A\textbf{u}_k^{B}\cup\textbf{u}_k^C, \textbf{0}^A\bar{\textbf{u}}_k^{B}\cup\bar{\textbf{u}}_k^C,\textbf{u}_{k^\prime}^{A}\textbf{0}^B\cup\textbf{u}_{k^\prime}^C, \bar{\textbf{u}}_{k^\prime}^{A}\textbf{0}^B\cup\bar{\textbf{u}}_{k^\prime}^C\}$.
\end{theorem}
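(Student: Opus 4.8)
\textit{Proof proposal.} The plan is to transcribe the qubit argument of Theorem~2 into the symplectic-vector language provided by the continuous-variable canonical decomposition theorem, using the dictionary: an HW generator $\leftrightarrow$ a vector in $\mathcal{R}^{2n}$; the group commutator $\leftrightarrow$ the symplectic form $\sigma$; multiplying generators $\leftrightarrow$ taking real linear combinations of vectors; and a quadrature measurement on $C$ along a direction $\textbf{w}\in\mathcal{R}^{2n_C}$ $\leftrightarrow$ adjoining $\textbf{0}^{AB}\cup\textbf{w}$ to the stabilizer subspace, discarding (up to linear recombination with the commuting generators) the unique generator that is not $\sigma$-orthogonal to it, and applying an outcome-dependent displacement correction by classical feedforward. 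Throughout, ``local operations on $C$'' means local Gaussian (Clifford) unitaries and quadrature measurements with feedforward, matching the discrete-variable setting, and ``transferring all the entanglement'' is read off the canonical form as: the number of correlated vector pairs across the $A|B$ cut of the final state equals the number across the $A|BC$ cut of the initial state, equivalently the $A$-reduced state is untouched.

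\emph{If part.} Assume condition~(1): for every $k$ one member is $A$-local, say (after relabelling) $\bar{\textbf{u}}_k^{AB}=\textbf{0}^A\cup\bar{\textbf{u}}_k^B$. Measure on $C$ the quadrature $g(\textbf{u}_k^C)$ for each $k$. By the exclusive symplectic relation of the canonical form (here for the $AB|C$ cut), $\textbf{u}_k^C$ is $\sigma$-orthogonal to every $C$-block except $\bar{\textbf{u}}_k^C$, and the $C$-blocks for distinct $k$ are mutually $\sigma$-orthogonal, so the measurements do not interfere and each replaces $\bar{\textbf{u}}_k^{AB}\cup\bar{\textbf{u}}_k^C$ by $\textbf{0}^{AB}\cup\textbf{u}_k^C$; reducing $\textbf{u}_k^{AB}\cup\textbf{u}_k^C$ modulo this new $C$-local generator leaves
\begin{equation}
    \mathcal{V}_f=span\left\{\bm{v}_i^{AB}\cup\textbf{0}^C,\textbf{0}^{AB}\cup\bm{v}_j^C,\textbf{u}_k^{AB}\cup\textbf{0}^C,\textbf{0}^{AB}\cup\textbf{u}_k^C\right\},
\end{equation}
in which $C$ factors out and the $A|B$ correlations are carried by $\{\bm{v}_i^{AB},\textbf{u}_k^{AB}\}$, exactly the non-$A$-local $AB$-blocks that carried the $A|BC$ entanglement in $\mathcal{V_S}$. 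Since these survive verbatim, the $A$-reduced state is unchanged. For condition~(2) I follow the qubit recipe: given the two pairs $\{\textbf{0}^A\textbf{u}_k^{B}\cup\textbf{u}_k^C,\textbf{0}^A\bar{\textbf{u}}_k^{B}\cup\bar{\textbf{u}}_k^C\}$ and $\{\textbf{u}_{k^\prime}^{A}\textbf{0}^B\cup\textbf{u}_{k^\prime}^C,\bar{\textbf{u}}_{k^\prime}^{A}\textbf{0}^B\cup\bar{\textbf{u}}_{k^\prime}^C\}$, a linear recombination turns them into pairs in which one member is $A$-local; the two $A$-local members need not be mutually $\sigma$-orthogonal, but, exactly as in the qubit proof, the corresponding $C$-quadratures can still be measured to peel off $C$, so condition~(2) reduces to condition~(1).

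\emph{Only if part.} First, by the symplectic Gram--Schmidt step already used to prove the canonical decomposition theorem, apply local Gaussian unitaries on $C$ to split off the modes of $C$ that are unentangled with $AB$; discarding them, write $\mathcal{V_S}=span\{\bm{v}_i^{AB}\cup\textbf{0}^C,\textbf{u}_k^{AB}\cup\textbf{u}_k^C,\bar{\textbf{u}}_k^{AB}\cup\bar{\textbf{u}}_k^C\}$ with $C$ maximally correlated with $AB$. Any useful quadrature measurement on $C$ along a $\textbf{w}$ can, after a $C$-local Gaussian rotation, be taken $\sigma$-orthogonal to all generators but one; taking that one to be $\bar{\textbf{u}}_1^{AB}\cup\bar{\textbf{u}}_1^C$, the post-measurement stabilizer contains $\textbf{0}^{AB}\cup\textbf{w}$ (removed by a $C$-local Gaussian) and contains $\textbf{u}_1^{AB}\cup\tilde{\textbf{u}}_1^C$ for some $\tilde{\textbf{u}}_1^C$; iterating over $k$ and combining vectors yields
\begin{equation}
    \mathcal{V}_f=span\left\{\bm{v}_i^{AB}\cup\textbf{0}^C,\textbf{w}_k^{AB}\cup\textbf{0}^C\right\},
\end{equation}
where $\textbf{w}_k^{AB}$ is a real linear combination of $\textbf{u}_k^{AB}$ and $\bar{\textbf{u}}_k^{AB}$ (and, in the coordinated case, of the $AB$-blocks of a second pair). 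Comparison with $\mathcal{V_S}$ shows the $A$-reduced state is preserved iff every $\textbf{w}_k^{AB}$ can be made $A$-local; and a short symplectic computation --- using that $\textbf{u}_k^{AB}$ and $\bar{\textbf{u}}_k^{AB}$ have non-zero mutual $\sigma$-product while being $\sigma$-orthogonal to all other canonical $AB$-blocks --- shows that a single-pair recombination is $A$-local only if $\textbf{u}_k^{AB}$ or $\bar{\textbf{u}}_k^{AB}$ already is (condition~1), and that a two-pair recombination is $A$-local only in the form of condition~(2). This is the symplectic analogue of the qubit ``$\tilde g_1^{AB}$ is not guaranteed correlated'' step and establishes necessity, simultaneously ruling out any ``cleverer'' measurement.

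\emph{Main obstacle.} The delicate point is this ``only if'' reduction, on two counts: (i) one must argue that an arbitrary local Gaussian protocol on $C$ --- not merely a single homodyne --- can be normalized to a sequence of rank-one quadrature measurements, each $\sigma$-orthogonal to all but one stabilizer generator, which is where the continuous spectrum and the gap between general Gaussian measurements and HW measurements need care (the remark about non-stabilizer trimming in the square-free case shows why one cannot expect more without leaving the Gaussian setting); and (ii) the linear-algebra bookkeeping that no recombination of a correlated pair's $AB$-blocks produces an $A$-local vector unless (1) or (2) holds. Everything else is a direct, if tedious, translation of the qubit proof.
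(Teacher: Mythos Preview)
Your proposal is correct and follows the same route the paper takes: the paper does not give an independent proof of the continuous-variable theorem but simply states that, with the canonical decomposition for CV stabilizers in hand, the bigger man principle ``can analogously be constructed'', i.e.\ by rerunning the qubit argument of Theorem~1 in the symplectic-vector language. Your dictionary (HW element $\leftrightarrow$ vector, commutator $\leftrightarrow$ symplectic form, generator multiplication $\leftrightarrow$ linear combination, Pauli measurement $\leftrightarrow$ quadrature measurement with feedforward) is exactly the intended translation, and your ``if'' and ``only if'' parts mirror the qubit proof step by step, including the reduction of condition~(2) to condition~(1) by recombining two pairs.

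Your self-flagged obstacle~(i), that a general local Gaussian protocol on $C$ must be normalized to a sequence of single quadrature measurements each $\sigma$-orthogonal to all but one generator, is a genuine gap---but it is equally a gap in the paper, which restricts throughout to Clifford/Gaussian operations and Pauli/HW measurements without arguing that nothing is lost. So you have matched the paper's level of rigor and, by naming the obstacle explicitly, arguably exceeded it.
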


\end{appendix}

\end{document}